\documentclass[11pt]{article}

\usepackage[margin=1.08in]{geometry}
\usepackage{amsmath, amssymb, amsthm}
\usepackage{mathtools}
\usepackage{hyperref}
\usepackage[capitalize,noabbrev]{cleveref}
\usepackage{microtype}
\usepackage{xcolor}
\usepackage{quantikz}
\usepackage{caption}
\usepackage{booktabs}
\usepackage{bbm}

\usepackage{thmtools}
\usepackage{thm-restate}

\DeclarePairedDelimiter{\norm}{\lVert}{\rVert}

\declaretheorem[name=Theorem]{theorem}
\declaretheorem[name=Proposition]{proposition}

\newtheorem{lemma}{Lemma}
\newtheorem{corollary}{Corollary}
\newtheorem{claim}{Claim}
\newtheorem{definition}{Definition}
\newtheorem{remark}{Remark}

\newcommand{\BQP}{\mathsf{BQP}}
\newcommand{\BPP}{\mathsf{BPP}}
\newcommand{\PH}{\mathsf{PH}}
\newcommand{\IQP}{\mathsf{IQP}}
\newcommand{\DQC}{\mathsf{DQC}}
\newcommand{\hBQP}{\tfrac{1}{2}\text{-}\mathsf{BQP}}
\newcommand{\Forr}{\Phi}
\renewcommand{\ket}[1]{|#1\rangle}
\renewcommand{\bra}[1]{\langle #1|}

\newcommand{\ha}{\widehat{a}}
\newcommand{\hb}{\widehat{b}}
\newcommand{\kb}[1]{|#1\rangle\!\langle #1|}
\newcommand{\one}{\mathbbm{1}}

\newcommand{\hsigma}{\widehat{\sigma}}

\newcommand{\zo}{\{0,1\}}

\newcommand{\odd}{\text{odd}}
\newcommand{\even}{\text{even}}
\newcommand{\COMMENT}[1]{}
\newcommand{\hf}{\widehat{f}}

\usepackage{titling}
\begin{document}
	
	\title{$\IQP$ circuits for $2$-\textsc{Forrelation}}
	\author{Quentin Buzet and Andr\'e Chailloux  
		\vspace{0.2cm} \\
		{\small Inria de Paris, COSMIQ team} \\
		{\small \texttt{quentin.buzet@inria.fr}} \quad {\small \texttt{andre.chailloux@inria.fr}}
	}
	\date{}

	\maketitle
	\vspace*{-1cm}
	\begin{abstract}
		The $2$-\textsc{Forrelation} problem provides an optimal separation between classical and quantum query complexity and is also the problem used for separating $\BQP$ and $\PH$ relative to an oracle. A natural question is therefore to ask what are the minimal quantum resources needed to solve this problem. We show that $2$-\textsc{Forrelation} can be solved using Instantaneous Quantum Polynomial-time ($\IQP$) circuits, a restricted model of quantum computation in which all gates commute. Concretely, two $\IQP$ circuits with two quantum queries and efficient classical processing suffice. For the signed variant of $2$-\textsc{Forrelation}, even a single $\IQP$ computation and query suffices. This answers a recent open question of Girish~\cite{Gir25} on the power of commuting quantum computations. We use this to show that there is an oracle $O$ such that $(\BPP^\IQP)^O \not\subseteq \PH^O$, strengthening the result of Raz and Tal~\cite{RT22}. Our results show that $\IQP$ circuits can be used for classically hard decision problems, thus providing a new route for showing quantum advantage with $\IQP$ circuits, avoiding the verification difficulties associated with sampling tasks. We also prove Fourier growth bounds for multi-query IQP circuits, together with bounds in terms of the size of their accepting set. The key ingredient is an algebraic identity of the quadratic function $Q(x) = \sum_{i < j} x_ix_j$ that allows extracting inner-product phases within an $\IQP$ circuit.
	\end{abstract}
	

	\newcommand{\mo}{\{-1,1\}}
\newcommand{\negl}{{negl}}
\newcommand{\Image}{\mathrm{Im}}

\section{Introduction}
\label{sec:intro}

\subsection{Context}
\label{ssec:context}

One of the central questions in quantum complexity theory is to understand where the boundary between classical and quantum computation lies. There have been many results showing exponential quantum speedups for query complexity (\cite{DJ92,Sim97,BV93} for instance). But the full extent of possible quantum speedups, and especially which quantum resources are needed to achieve them, is still not well understood. The $2$-\textsc{Forrelation} problem, introduced by Aaronson~\cite{Aar10} and developed by Aaronson and Ambainis~\cite{AA15}, has emerged as a key problem for these questions.

\paragraph{The $2$-\textsc{Forrelation} problem.}
Given quantum oracle access to two functions $f,g:\{0,1\}^n\to\{-1,+1\}$, the task is to decide whether the \emph{forrelation}
\[
\Forr(f,g) = \frac{1}{2^{3n/2}}
\sum_{x,y \in \zo^n}
(-1)^{x \cdot y} f(x) g(y)
\]
is large or small in absolute value. The problem can also be defined with the promise that $\Phi(f,g)$ is large or small, this is the signed variant. A simple quantum circuit solves this problem in absolute value with a single query to $O_f$ and a single query to $O_g$ (Figure~\ref{Figure:1}).
\begin{figure}[h!]
	\centering
	\begin{quantikz}
		\lstick{$\ket{0^n}$} & \qwbundle{n} &
		\gate{H^{\otimes n}} &
		\gate{O_f} &
		\gate{H^{\otimes n}} &
		\gate{O_g} &
		\gate{H^{\otimes n}} &
		\meter{}
	\end{quantikz} 
	\caption{$\text{BQP}$ circuit for $2$-\textsc{Forrelation} with one query to $O_f$ and one query to $O_g$}
	\label{Figure:1}
\end{figure}

A direct calculation shows that this circuit outputs $0^n$ with probability $\Phi^2(f,g)$. Aaronson and Ambainis~\cite{AA15} also observed that $2$-\textsc{Forrelation} can be solved with a \emph{single query} to a joint oracle
$$ O_{f,g} \ket{b}\ket{x} = \begin{cases}
	f(x)\ket{0}\ket{x} & \text{if } b = 0, \\
	g(x)\ket{1}\ket{x} & \text{if } b = 1,
\end{cases}
$$
with the following $n+1$ qubit circuit below.

\begin{figure}[h!]
	\centering
	\begin{quantikz}
		\lstick{$\ket{0}$} & \gate{H}  & \gate[2]{O_{f,g}} & \ctrl{1}& \gate{H} & \meter{} \\
		\lstick{$\ket{0^n}$} & \gate{H^{\otimes n}} & & \gate{H^{\otimes n}} &  & \qw \\
	\end{quantikz}
	\caption{$\text{BQP}$ circuit for $2$-\textsc{Forrelation} with a single query to $O_{f,g}$}
	\label{Figure:2}
\end{figure}

This circuit measures $0$ on the first qubit with probability $\frac{1}{2} + \frac{1}{2}\Phi(f,g)$, giving a constant bias for the signed variant.

Regarding the classical hardness of $2$-\textsc{Forrelation}, Aaronson and Ambainis~\cite{AA15} proved a nearly tight classical lower bound of $\Omega(\frac{2^{n/2}}{\log(n)})$ randomized queries. Bansal and Sinha~\cite{BS21} extended the tight separation to $k$-Forrelation for all $k$. Sherstov, Storozhenko, and Wu~\cite{SSW21}, via different techniques, also obtained the tight separation for all $k$ and moreover removed the logarithmic factor for $2$-\textsc{Forrelation}, yielding the tight $\Theta(2^{n/2})$ bound for randomized query complexity and confirming that $2$-\textsc{Forrelation} achieves the maximal possible quantum vs. classical query separation. Beyond the query model, Raz and Tal~\cite{RT22} used $2$-\textsc{Forrelation} to give the first oracle separation between $\BQP$ and the polynomial hierarchy~$\PH$. 

Despite achieving optimal quantum query advantage, and surpassing the power of $\PH$, $2$-\textsc{Forrelation} is not known, nor believed, to be complete for $\BQP$\footnote{$k$-\textsc{Forrelation} is known to be $\BQP$-complete for $k=\text{poly}(n)$ \cite{AA15}.}. This raises a natural question: \emph{which restricted quantum models can still solve $2$-\textsc{Forrelation}?} Understanding this would help map out the intermediate landscape between classical and fully quantum computation. Several recent works have introduced complexity classes that sit between $\BPP$ and $\BQP$, capturing the power of noisy or resource-limited quantum circuits.

\paragraph{Complexity classes below $\BQP$ and Fourier growth.}

Jacobs and Mehraban~\cite{JM24} introduced the class $\hBQP$, in which the computation starts from half of a maximally entangled state; the other half is revealed only at the end and measured in the computational basis. Another family are $\DQC_k$ circuits~\cite{KL98}, where the computation starts with $k$ clean qubits (in the $\ket{0}$ state) and the remaining qubits are in the maximally mixed state, as depicted in Figure~\ref{Figure:hBQP}.
\begin{figure}[h!]
	\centering
	\begin{quantikz}
		\lstick[wires=2]{$\dfrac{1}{\sqrt{2^n}}
			\displaystyle\sum_{x \in \zo^n}\ket{x}\ket{x}$}
		& \qwbundle{n} & \gate[wires=1]{U} & \meter{} \\
		& \qwbundle{n} & \qw & \meter{} 
	\end{quantikz} $\quad ; \quad$
	\begin{quantikz}
		\lstick{$\ket{0^k}$}  & \gate[wires=2]{U}
		& \meter{} \\
		\lstick{$\frac{I}{2^n}$}
		& \qw &  
	\end{quantikz}
	\caption{On the left: generic $\hBQP$ Circuit. On the right: generic $\DQC_k$ circuit.}
	\label{Figure:hBQP}
\end{figure}

Even $\DQC_1$, with just one clean qubit, can solve problems that are believed to be classically intractable, such as estimating the normalized trace of a unitary~\cite{KL98} or computing Jones polynomials~\cite{SJ08}. Moreover, any $\DQC_k$ circuit with $k = O(\log n)$ can be simulated by a $\hBQP$ circuit~\cite{JM24}, so $\hBQP$ is the more powerful model. The $2$-\textsc{Forrelation} problem separates these circuit families: it lies within $\hBQP$~\cite{JM24} but requires an exponential number of queries for $\DQC_k$ circuits when $k = O(\log(2^n))$ \cite{Gir25}.

The $\hBQP$ upper bound comes from an adaptation of the single-query circuit of~\cite{AA15}. The $\DQC_k$ lower bound uses a technique called \emph{Fourier growth}~\cite{Gir25,BS21,IRR+21}, which bounds the low-weight Fourier coefficients of the acceptance probability of a quantum circuit and thereby places limits on which problems a given model can solve. 

\paragraph{IQP circuits.} In this work, we will focus on another family of quantum circuits which are weaker than $\BQP$ circuits. Introduced by Shepherd and Bremner~\cite{SB09}, \emph{Instantaneous Quantum Polynomial-time} ($\IQP$) circuits on $m$ qubits are quantum circuits, where each gate is diagonal in the $X$ basis. Since each pair of gates commute, they can be applied in any order (or simultaneously), hence the name instantaneous. An equivalent formulation of $\IQP$ circuits are circuits on $m$ qubits of the form $C = H^{\otimes m}\, D\, H^{\otimes m}$, where $D$ is a diagonal unitary in the computational basis (i.e $Z$ basis). The circuit starts on $\ket{0^m} $ and is measured in the computational basis (Figure~\ref{Figure:IQP}).

\begin{figure}[h!]
	\centering
	\begin{quantikz}
		\lstick{$\ket{0^m}$} & 
		\gate{H^{\otimes m}} & \gate{D} &
		\gate{H^{\otimes m}} & \meter{}
	\end{quantikz}
	\caption{Generic $\IQP$ circuit}
	\label{Figure:IQP}
\end{figure}

$\IQP$ circuits are structurally much easier to implement than regular $\BQP$ circuits. The fact that all gates can be applied simultaneously implies that the total coherence time of the circuit can be much lower. Error correction is also simpler as the only gates we apply in the central layer are diagonal in the $Z$ basis. Despite its restricted structure, $\IQP$ is believed to be strictly more powerful than $\BPP$: Bremner, Jozsa, and Shepherd~\cite{BJS11} showed that $\BPP$ cannot sample from the output distribution of $\IQP$ circuits unless $\PH$ collapses to its third level, and Bremner, Montanaro, and Shepherd~\cite{BMS16} strengthened this to an average-case hardness result. Jacobs and Mehraban showed that any $\IQP$ circuit can be simulated in $\hBQP$~\cite{JM24}, which bounds $\IQP$ from above. The relation between these three quantum circuit classes is depicted in Figure~\ref{Figure:BelowBQP}. In~\cite{Gir25}, Girish left the following open question:

\begin{figure}[h!]
	\centering
	\begin{tikzpicture}[font=\large]
		\draw[thick] (-5,-3.2) rectangle (5,3.2);
		\node at (-3.8,2.5) {$\mathrm{BQP}$};
		
		\draw[thick] (0,0) ellipse (4 and 2.5);
		\node at (-2,1.5) {$\tfrac{1}{2}\text{-}\mathrm{BQP}$};
		
		\draw[thick] (-1.0,0) circle (1.25);
		\node at (-1.5,0) {$\mathrm{DQC_1}$};
		
		\draw[thick] (1.0,0) circle (1.25);
		\node at (1.5,0) {$\mathrm{IQP}$};
		
		\begin{scope}[shift={(5.5,0)}]
			\node[anchor=west] at (0,1) {\textbf{}};
		\end{scope}
	\end{tikzpicture}
	\caption{Complexity classes below $\BQP$. The $2$-\textsc{Forrelation} problem can be solved in $\hBQP$ but not in $\DQC_1$ while the $3$-\textsc{Forrelation} problem can be solved in $\BQP$ but not in $\hBQP$ \cite{Gir25}.}
	\label{Figure:BelowBQP}
\end{figure}

\begin{quote}
	Is $2$-\textsc{Forrelation} solvable in
	$\IQP$ and if not, can we prove Fourier growth bounds?
\end{quote}

In the oracle/query model, the oracles $O_f$ and $O_g$ (or equivalently $O_{f,g}$) are phase oracles, hence diagonal in the computational basis. Thus, at the query level, they commute with the other diagonal operations and can be included in the diagonal block $D$ of an $\IQP$ circuit. Equivalently, they may be viewed as products of $Z$, $CZ$, $CCZ$, and higher-controlled $Z$-type gates, with each phase query counted as a single oracle gate. The difficulty is that the known quantum circuits for $2$-\textsc{Forrelation} are not of $\IQP$ form: they either place a Hadamard transform between $O_f$ and $O_g$, or use a controlled Hadamard (see Figures~\ref{Figure:1} and~\ref{Figure:2}). Indeed, if one tries to rewrite the standard one-query algorithm so that the final operation is a Hadamard layer, the remaining middle operation contains $\kb{0}{0}\otimes H^{\otimes n}+ \kb{1}{1}\otimes I^{\otimes n}$, which is not diagonal in the computational basis. Thus the standard algorithm is not simply an $\IQP$ circuit in disguise; the challenge is to reproduce the effect of this intermediate controlled Hadamard using only a diagonal block and classical pre/post-processing.

On the contrary, many well-known quantum query algorithms already are $\IQP$ circuits when the oracle is provided as a phase oracle: Bernstein-Vazirani~\cite{BV93}, Deutsch-Jozsa~\cite{DJ92}, Simon~\cite{Sim97}, hidden-shift for quadratic bent functions~\cite{Rot09}...

\subsection{Short Overview of contributions}
\label{Section:Contributions}

\subsubsection{$\IQP$ computation for $2$-\textsc{Forrelation}.} 

The central contribution of this paper is to show that $2$-\textsc{Forrelation} can be solved with an $\IQP$ computation. What we call an $\IQP$ computation is running one or several $\IQP$ circuits with efficient classical pre and post-processing. This is technically captured by the class $\BPP^{\IQP}$.

Concretely, we construct an explicit $\IQP$ computation on $n+1$ qubits, whose acceptance probability equals $\frac{1}{2} + \frac{1}{2\sqrt{2}}\Phi(f,g)$ when $n$ is odd, and $\frac{1}{2} + \frac{1}{4}\Phi(f,g)$ when $n$ is even. This $\IQP$ computation makes a single call to an $\IQP$ circuit which performs a single quantum query to $O_{f,g}$. The pre-processing consists of flipping a random coin (which influences the $\IQP$ circuit we run) and the post-processing tests that the output is in some set $F$.

For the absolute value variant, we also show how to solve $2$-\textsc{Forrelation}. We actually just run the above $\IQP$ computation twice and accept if both runs accept or if both runs reject. A simple calculation shows that this procedure accepts with probability $\frac{1}{2} + \frac{1}{4}\Phi^2(f,g)$ if $n$ is odd and $\frac{1}{2} + \frac{1}{8}\Phi^2(f,g)$ if $n$ is even. 

The key tool is the quadratic function $Q(x) = \sum_{i < j} x_ix_j$ for $x = (x_1,\dots,x_n) \in \zo^n$. It is a classical object in the study of quadratic forms over $\mathbb{F}_2$, closely related to the second-order Reed--Muller code. Quadratic functions have also been a central object of study in the context of $\IQP$ circuits~\cite{BJS11,BMS16}. The key identity is $Q(x) + Q(y) + Q(x+y) = x \cdot y + |x||y|\pmod{2}$, which can be understood as saying that $Q$ nearly bilinearizes the inner product, with a correction term $|x||y|$.

When $|x+y|$ is odd, at least one of $|x|$, $|y|$ is even, so $|x||y| \mod2$ disappears. When $n$ is odd, we show how to use the above identity to choose a diagonal operator $D$ and acceptance set $F$, which gives us an $\IQP$ circuit accepting with probability\ $\frac{1}{2} + \frac{1}{\sqrt{2}}\Phi_{\odd}(f,g)$, where $\Phi_{\odd}(f,g)$ is the quantity $\Phi(f,g)$ restricted to summing over pairs $(x,y)$ with $|x+y|$ odd. A symmetric construction handles $\Phi_{\even}$, and by performing one of the two circuits at random, we obtain an accepting probability that depends on  $\Phi(f,g)$. For $n$ even, a simple one-bit padding reduces to the odd case at a constant factor cost.

\subsubsection{Oracle separation between $\BPP^\IQP$ and $\PH$.}
As a corollary, our result extends the oracle separation of Raz and Tal~\cite{RT22} between $\BQP$ and the polynomial hierarchy to the $\IQP$ setting. The Raz--Tal separation is based on an instantiation of $2$-\textsc{Forrelation}: they construct a distribution $\mathcal{D}$ on pairs $(f,g)$ such that $\mathbb{E}_{(f,g)\sim\mathcal{D}}[\Phi^2(f,g)]$ is inverse-polynomially large, while the same expectation under the uniform distribution is negligible.

Our $\IQP$ computation that succeeds with probability $\frac{1}{2} + \Theta(\Phi^2(f,g))$ makes $2$ calls to an $\IQP$ circuit with efficient classical pre and post-processing. It can be used to solve the distinguishing problem of Raz and Tal with $\frac{1}{\mathrm{poly}(n)}$ advantage using two queries to $O_{f,g}$, while a $\PH$ computation requires an exponential number of queries for this task, as shown by Raz and Tal. This implies that there is an oracle $O$ such that $\left(\BPP^{\IQP}\right)^O \not\subseteq \PH^O$. Note that this result generalizes the original $\BQP^O \not\subseteq \PH^O$ separation of Raz and Tal, since $\IQP$ is a weaker model than $\BQP$ and $\BPP^\BQP = \BQP$. This places $\IQP$ computations, despite their restricted commuting structure, strictly outside the reach of the polynomial hierarchy.

\subsubsection{Fourier growth bounds for $\IQP$.}

The acceptance probability function $p$ of any $d$-query algorithm is a polynomial in the oracle bits, and the level-$\ell$ Fourier growth $L_{1,\ell}(p)$ measures how much weight this polynomial puts on degree-$\ell$ coefficients.
For some restriction $\rho$, writing $N=2^n$, solving $2$-\textsc{Forrelation} with constant advantage requires $L_{1,2}(p|_\rho) = \Omega(\sqrt{N})$~\cite{BS21}, so upper bounds on $L_{1,2}(p|_\rho)$ translate into query lower bounds. Table~\ref{tab:fourier} summarizes the known bounds for different models and compares them with ours. We prove for $d$-query $\IQP$ circuits the bound $L_{1,2}(p|_\rho)=O(\sqrt{dN})$, together with the accepting-set bound $L_{1,2}(p|_\rho)\leq \sqrt{|F|}$ in terms of the accepting set. More generally, for every fixed level $\ell$, we obtain unsigned Fourier growth bounds of order $O_\ell(d^{\lceil \ell/2\rceil/2}N^{\lfloor \ell/2\rfloor/2})$. Our accepting-set bound implies that any $\IQP$ circuit solving $2$-\textsc{Forrelation} with constant advantage must have $|F| = \Omega(N)$, and our construction of Section~\ref{Section:3} achieves $|F| = \Theta(N)$.

\begin{table}[ht]
	\centering
	\begin{tabular}{lcc}
		\toprule
		Model & $L_{1,2}(p|_\rho)$ & Queries for $2$-\textsc{Forrelation} \\
		\midrule
		$\BQP$~\cite{AA15,IRR+21}    & $d^2\sqrt{N}$  & $O(1)$   \\[2pt]
		$\hBQP$~\cite{JM24,Gir25} & $d^2\sqrt{N}$    & $O(1)$ \\[2pt]
		$\DQC_1$~\cite{Gir25} & $d^2$             & $d = \Omega\left(N^{1/4}\right)$ \\[2pt]
		$\boldsymbol{\IQP}$ \cite{GSTW24},\textbf{[this work]} & $\boldsymbol{O\!\left(\min\left\{\sqrt{|F|},\sqrt{dN}\right\}\right)}$ & $\boldsymbol{O(1)}$ \\[2pt]
		$\BPP$~\cite{Tal20,SSW21} & $d\sqrt{n}$ & $d = \Omega\left(\sqrt{N}/\sqrt{n}\right)$ \\
		\bottomrule
	\end{tabular}
	\caption{Upper bounds on the level-$2$ Fourier growth
		of $d$-query algorithms, up to $O(1)$ factors, and the upper bounds on the number of queries $d$ needed to solve $2$-\textsc{Forrelation} (which requires $L_{1,2}(p|_\rho) = \Omega(\sqrt{N})$).
	}
	\label{tab:fourier}
\end{table}

We present in Section~\ref{Section:Discussion} a discussion of the implications of these results, but we first provide the detailed overview of our contributions.

\subsection{Detailed overview of contributions and techniques used}

\subsubsection{Main construction}
\begin{restatable}{theorem}{TheoremMain}\label{Theorem:Main}
	Let functions $f,g : \zo^n \rightarrow \mo$. There exists an $\IQP$ computation making a single quantum query to $O_{f,g}$ that accepts with probability
	\begin{align*}
		P_{acc} & = \frac{1}{2} + \frac{1}{2\sqrt{2}} \Phi(f,g) & \text{if } n \ \odd \\
		P_{acc} & = \frac{1}{2} + \frac{1}{4} \Phi(f,g) & \text{if } n \ \even 
	\end{align*}
\end{restatable}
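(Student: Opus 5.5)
The plan is to build one $\IQP$ circuit on $n+1$ qubits, with a control qubit $b$ and a register $x$, and compute its acceptance probability directly. The diagonal layer will be
\[
D\ket{b}\ket{x} = h_b(x)\,(-1)^{Q(x)}\ket{b}\ket{x},\qquad h_0=f,\ h_1=g .
\]
It consists of one query to $O_{f,g}$ followed by the phase $(-1)^{Q(x)}$, which is implemented by $CZ$ gates on all pairs $i<j$. These operators are diagonal, so the circuit is of the form $H^{\otimes (n+1)} D H^{\otimes (n+1)}$.

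\textbf{Step 1: expand the acceptance probability.} Write $w=(b,x)$. For an accepting set $F\subseteq \zo^{n+1}$,
\[
P_{acc} = 4^{-(n+1)}\sum_{w,w'} D(w)D(w')\,\widehat{1_F}(w+w'),\qquad \widehat{1_F}(u)=\sum_{v\in F}(-1)^{u\cdot v}.
\]
I will take $F=\{(c,z): c = Q(z)+\lambda(z)\}$ with a simple correction $\lambda$, for instance $\lambda(z)=|z| \bmod 2$ or a constant; the exact choice is fixed in Step 2. Then $|F|=2^n$, so $1_F = \tfrac12\bigl(1+(-1)^{c+Q(z)+\lambda(z)}\bigr)$.

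The constant part of $1_F$ contributes only at $w=w'$, and gives exactly $\tfrac12$.

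The non-constant part carries the character $(-1)^{c}$, so summing over $c$ kills all terms with $b=b'$. What remains is the cross terms $b\neq b'$. After applying the identity $Q(x)+Q(y)=Q(x+y)+x\cdot y+|x||y|$, they read
\[
\frac{2}{4^{n+1}}\sum_{x,y} f(x)g(y)(-1)^{x\cdot y+|x||y|}\;(-1)^{Q(x+y)}\,S(x+y),
\qquad S(u)=\sum_z(-1)^{Q(z)+\lambda(z)+u\cdot z}.
\]

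\textbf{Step 2: compute the Walsh transform of $(-1)^{Q}$.} This is the key step. Since $Q(z)=\binom{|z|}{2}\bmod 2$, the sign $(-1)^{Q(z)}$ depends only on $|z|\bmod 4$. Writing it as $\mathrm{Re}\bigl(e^{-i\pi/4}\sqrt2\, i^{|z|}\bigr)$-type combinations lets $S(u)$ factor as a product over coordinates of terms $(1\pm i)$. This should give $|S(u)|=2^{(n+1)/2}$ for $n$ odd on one parity class of $|u|$ and $0$ on the other. It should also show that the phase of $S(u)$ is exactly $(-1)^{Q(u)}$ up to a global sign fixed by $n \bmod 8$; that global sign is absorbed by choosing the constant in $\lambda$.

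I expect the main obstacle to be getting these signs and the parity support exactly right, so that $(-1)^{Q(x+y)}S(x+y)$ becomes a constant times $1[|x+y|\text{ odd}]$ (respectively even). On the odd class the correction $(-1)^{|x||y|}$ vanishes automatically, because one of $|x|,|y|$ is even. The result would be
\[
P_{acc}=\tfrac12+\tfrac{1}{\sqrt2}\Phi_{\odd}(f,g),
\]
where $\Phi_{\odd}$ sums only over pairs with $|x+y|$ odd. For the even class, $|x||y|$ need not vanish. There I plan to switch $\lambda$ to the other linear option, or to add a linear phase $(-1)^{|x|}$ on the $b=1$ branch of $D$ (this flips $|y|$-dependent signs), and redo the computation to obtain $\tfrac12+\tfrac1{\sqrt2}\Phi_{\even}$.

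\textbf{Step 3: assemble the final computation.} The classical pre-processing flips a fair coin and runs one of the two circuits from Step 2. Averaging the two acceptance probabilities gives
\[
\tfrac12+\tfrac{1}{2\sqrt2}\bigl(\Phi_{\odd}+\Phi_{\even}\bigr)=\tfrac12+\tfrac1{2\sqrt2}\Phi(f,g)
\]
for $n$ odd.

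For $n$ even, I pad the input with one dummy bit: set $f'(x,t)=f(x)$ and $g'(y,s)=g(y)$. One query to $O_{f',g'}$ is simulated by one query to $O_{f,g}$, since the extra qubit is untouched. A direct computation gives $\Phi(f',g')=\Phi(f,g)/\sqrt2$, because the sum over $t,s$ of $(-1)^{ts}$ equals $2$, which is divided by $2^{3/2}$. Since $n+1$ is odd, the odd case applies and yields
\[
\tfrac12+\tfrac1{2\sqrt2}\cdot\tfrac{\Phi(f,g)}{\sqrt2}=\tfrac12+\tfrac14\Phi(f,g).
\]
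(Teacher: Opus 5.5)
Your construction is the paper's, parametrized from the other side of the Fourier transform. Your $D$ is the paper's choice $\rho_0=\rho_1=(-1)^Q$. Your set $F=\{(c,z):c=Q(z)+\lambda(z)\}$ is the paper's $\{0\}\times T\cup\{1\}\times\overline{T}$ with $\hsigma=(-1)^{Q+\lambda}$, and your $S$ equals $2^{n/2}\sigma$. So your Step~2 is Lemma~\ref{Lemma:OddBent} read backwards via $\widehat{\widehat{h}}=h$. Indeed $\sqrt2\sin(\frac{\pi}{4}(n-2|z|))=\pm(-1)^{Q(z)+\lambda(z)}$, where the linear part of $\lambda$ is $|z|$ if $n\equiv 1 \pmod 4$ and $0$ if $n\equiv 3\pmod 4$, and the sign is fixed by $n \bmod 8$, as you anticipated. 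The coin flip and the one-bit padding also match the paper.

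One step needs correcting. For $\Phi_{\even}$, the two modifications you offer as alternatives are both required, not either one.
\begin{itemize}
\item Changing $\lambda$ only changes $S(x+y)$, which is a function of $x+y$. It can move the support to even $|x+y|$, but it cannot cancel $(-1)^{|x||y|}=(-1)^{|y|}$ on that class, since $(-1)^{|y|}$ is not a function of $x+y$ there (take $x=y$).
\item Conversely, adding the phase $(-1)^{|x|}$ on the $b=1$ branch while keeping your first $\lambda$ leaves $S$ supported on odd weights. It also introduces a spurious factor $(-1)^{|y|(|x|+1)}=(-1)^{|y|}$ there.
\end{itemize}
The paper uses exactly the combination: $\rho_1=(-1)^{Q+|x|}$ together with $\sigma$ supported on even weights. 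That $\sigma$ has transform $\sqrt2\cos(\frac{\pi}{4}(n-2|x|))=\pm(-1)^{Q+\lambda'}$, where $\lambda'$ has the other linear part. With both changes, your Step~1 computation gives $\frac12+\frac{1}{\sqrt2}\Phi_{\even}(f,g)$, and the rest of your argument goes through.
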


The starting point is the observation that the oracle $O_{f,g}$ is diagonal in the computational basis, so it can be absorbed into the diagonal layer of an $\IQP$ circuit. The challenge is that known algorithms for $2$-\textsc{Forrelation} use the oracle between two layers of Hadamards (Figure~1), while an $\IQP$ circuit has only one Hadamard layer on each side.

Our circuit uses $n+1$ qubits and calls the oracle exactly once. Since $O_{f,g}$ commutes with any other diagonal operator, we can assume the oracle is applied first, followed by a diagonal $D$ independent of $f,g$. The final state is measured in the computational basis and the circuit accepts if the outcome lies in some set $F \subseteq \zo^{n+1}$. Such a circuit can be represented as follows.

\begin{figure}[!h]
	$$
	\begin{quantikz}
		\lstick{$\ket{0^{n+1}}$} & 
		\gate{H^{\otimes (n+1)}} & \gate{O_{f,g}} & \gate{D} & 
		\gate{H^{\otimes (n+1)}} & \gate{\Pi_F}
	\end{quantikz}
	$$
	\caption{$\IQP$ circuit on $n+1$ qubits with a single call to $O_{f,g}$ and with a projection on the accepting space}
	\label{Figure:6}
\end{figure}

A circuit of this form has two degrees of freedom: the choice of $D$ and the choice of $F$. We write $D = \sum_{x \in \zo^{n+1}} e^{i \varphi_x} \kb{x}$ with each $\varphi_x \in [0,2\pi)$.
A natural case where the accepting probability has a simple form is when 
\begin{itemize}
	\item $D = \sum_{b \in \zo} \sum_{x \in \zo^n} {\rho_b(x)}\kb{bx}$ for functions $\rho_0,\rho_1 : \zo^n \rightarrow \mo$.
	\item and $F = \{0\} \times T \cup \{1\} \times \overline{T}$, with $T \subseteq \zo^n$ and $\overline{T} = \zo^n \backslash T$.
\end{itemize} 
Here and throughout, for any function $h : \zo^n \rightarrow \mathbb{C}$, we define \\ $\widehat{h}(x) = \frac{1}{\sqrt{2^n}} \sum_{y \in \zo^n} (-1)^{x \cdot y}h(y)$. In the above setting, we show with a direct calculation using convolution relations that 

\begin{align*}P_{acc} = \frac{1}{2} + \sqrt{\frac{1}{2^{3n + 2}}}\sum_{x,y \in \zo^n} f(x)g(y) \rho_0(x)\rho_1(y)\sigma(x+y) \quad \text{where} \quad
	\sigma  = \widehat{(-1)^{\one_{\overline{T}}}},
\end{align*}

The above accepting probability can be achieved for any choice of $\rho_0,\rho_1 : \zo^n \rightarrow \mo$. The definition $\sigma = \widehat{(-1)^{\one_{\overline{T}}}}$, or equivalently $\hsigma = (-1)^{\one_{\overline{T}}}$ enforces $|\hsigma| = 1$ which is a bentness condition for $\sigma$. The following proposition gives the key conditions on $\rho_0, \rho_1, \sigma$.

\begin{restatable}{proposition}{PropositionMain}\label{Proposition:2}
	If there exist functions $\rho_0,\rho_1 : \zo^n \rightarrow \mo$ and a function $\sigma : \zo^n \rightarrow \mathbb{R}$ such that 
	\begin{itemize}
		\item The diagonal operator defined by $\rho_0,\rho_1$
		admits a polynomial-size implementation using computational-basis diagonal gates.
		\item $\Image(\hsigma) \subseteq \{-1,1\}$ (the bentness condition) and $\hsigma$ is efficiently computable by a classical algorithm.
	\end{itemize}
	Then there exists an $\IQP$ circuit of the form of Figure~\ref{Figure:6} that accepts with probability
	$$ P_{acc} = \frac{1}{2} + \sqrt{\frac{1}{2^{3n + 2}}}\sum_{x,y \in \zo^n} f(x)g(y) {\rho_0(x) \rho_1(y)}\sigma(x+y).$$
\end{restatable}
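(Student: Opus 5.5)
The plan is to instantiate the circuit of Figure~\ref{Figure:6} with the natural choices suggested by the statement, and then compute its acceptance probability directly. Let $D=\sum_{b\in\zo}\sum_{x\in\zo^n}\rho_b(x)\kb{bx}$. This operator is diagonal with $\pm1$ entries, and by the first hypothesis it has a polynomial-size implementation. Let $T=\{z\in\zo^n:\hsigma(z)=1\}$ and $F=\{0\}\times T\cup\{1\}\times\overline{T}$. By the bentness hypothesis, $\hsigma=(-1)^{\one_{\overline T}}$. The circuit is then an $\IQP$ circuit making one query to $O_{f,g}$. The post-processing only has to evaluate $\hsigma(z)$ on the measured $z$, and check whether $\hsigma(z)=(-1)^c$ for the measured first bit $c$; this is efficient by the second hypothesis.

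\textbf{Step 1: amplitudes.} Set $h_0=f\rho_0$ and $h_1=g\rho_1$; both are real and $\pm1$-valued. After $H^{\otimes(n+1)}$, $O_{f,g}$ and $D$, the state is $2^{-(n+1)/2}\sum_{b,x}h_b(x)\ket{bx}$. Applying the final Hadamard layer, the amplitude of $\ket{cz}$ is
\[
\frac{1}{2^{n+1}}\sum_{b,x}(-1)^{cb+z\cdot x}h_b(x)=\frac{1}{2\sqrt{2^n}}\bigl(\widehat{h_0}(z)+(-1)^c\widehat{h_1}(z)\bigr).
\]
This amplitude is real, so
\[
\Pr[(c,z)]=\frac{1}{4\cdot 2^n}\bigl(\widehat{h_0}(z)^2+\widehat{h_1}(z)^2+2(-1)^c\widehat{h_0}(z)\widehat{h_1}(z)\bigr).
\]

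\textbf{Step 2: sum over $F$.} For $z\in T$ we accept on $c=0$, and for $z\notin T$ we accept on $c=1$. In both cases the sign $(-1)^c$ equals $\hsigma(z)$. Hence
\[
P_{acc}=\frac{1}{4\cdot 2^n}\sum_z\bigl(\widehat{h_0}(z)^2+\widehat{h_1}(z)^2\bigr)+\frac{1}{2\cdot 2^n}\sum_z\widehat{h_0}(z)\widehat{h_1}(z)\hsigma(z).
\]
By Parseval with our unitary normalization, $\sum_z\widehat{h_b}(z)^2=\sum_x h_b(x)^2=2^n$, so the first term equals $\tfrac12$.

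\textbf{Step 3: cross term (the main point).} This is the convolution step and needs care with normalizations. Expanding both transforms gives
\[
\sum_z\widehat{h_0}(z)\widehat{h_1}(z)\hsigma(z)=\frac{1}{2^n}\sum_{x,y}h_0(x)h_1(y)\sum_z(-1)^{z\cdot(x+y)}\hsigma(z).
\]
Since the transform $\widehat{\cdot}$ is an involution, the inner sum equals $\sqrt{2^n}\,\sigma(x+y)$. The cross term is therefore
\[
\frac{1}{2\cdot2^n\cdot 2^{n/2}}\sum_{x,y}f(x)g(y)\rho_0(x)\rho_1(y)\sigma(x+y),
\]
and $2\cdot 2^n\cdot 2^{n/2}=\sqrt{2^{3n+2}}$, which is the claimed formula. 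I expect no real obstacle beyond keeping the normalization constants consistent. The one point to check is that $\sigma$ real with $\hsigma$ real is consistent, which holds because the transform of a real function over $\zo^n$ is real.
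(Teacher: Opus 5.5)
Your proposal is correct and follows essentially the same route as the paper. You make the same choice of $D$, of $T=\{z:\hsigma(z)=1\}$ and of $F=\{0\}\times T\cup\{1\}\times\overline{T}$, you get the $\tfrac12$ term by Parseval in the same way, and you handle the cross term by a convolution step; the only cosmetic difference is that you apply the involution $\widehat{\widehat{\sigma}}=\sigma$ directly instead of expanding all three transforms and using character orthogonality, and your normalization constants check out.
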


This accepting probability is derived using a direct calculation and using convolution relations.
By assumption, the diagonal operator $D$ has an efficient diagonal implementation. Computing efficiently $\hsigma$ means we have an efficient membership algorithm for $T$, and hence we can perform the projection $\Pi_F$ efficiently classically. The sum in the accepting probability resembles the quantity $\Phi(f,g)$. If we could choose functions $\rho_0,\rho_1$ and $\sigma$ such that
\begin{align}\forall x,y \in \zo^n, \ {\rho_0(x) \rho_1(y)}\sigma(x+y) = (-1)^{x \cdot y},
\end{align} 
or at least have a proportionality relation between these quantities, then we would be done. We prove however that no such functions exist. The impossibility proof does fortunately reveal that a weaker version of the relation can hold: if we restrict to pairs $(x,y)$ where $|x+y|$ has a fixed parity, then suitable functions do exist.

\paragraph{The quadratic function $Q$.} The key function that we will use is the quadratic function 
$$\forall x = (x_1,\dots,x_n) \in \zo^n, \ Q(x) = \sum_{1 \le i < j \le n} x_ix_j.$$ 
This function has actually a very simple form, 
$Q(x) = \binom{|x|}{2} \mod2$. The key relation is 
$$ Q(x) + Q(y) + Q(x+y) = x \cdot y + |x||y| \mod 2.$$
In particular, when $|x+y|$ is odd, at least one of $|x|$, $|y|$ is even, so $|x||y| \equiv 0 \mod2$ and $Q(x) + Q(y) + Q(x+y) = x \cdot y$. We choose  $\rho_0 = \rho_1 = (-1)^Q$. To enforce the odd parity of $|x+y|$, we define $\sigma(x) = \sqrt{2}(-1)^{Q(x)}$ for $|x|$ odd, $\sigma(x) = 0$ otherwise, where the $\sqrt{2}$ factor is required for the bentness condition. With this choice of functions, we have 
\begin{itemize}
	\item If $|x+y|$ is odd, $\rho_0(x)\rho_1(y)\sigma(x+y) = \sqrt{2} (-1)^{Q(x) + Q(y) + Q(x+y)} = \sqrt{2}(-1)^{x \cdot y}$.
	\item If $|x+y|$ is even, $\rho_0(x)\rho_1(y)\sigma(x+y) = 0$.
\end{itemize}

Moreover we can show that 

\begin{proposition}
	$\hsigma(x) = \sqrt{2}\sin(\frac{\pi}{4}(n - 2|x|))$. In particular, if $n$ is odd, $|\hsigma(x)| = 1$. 
\end{proposition}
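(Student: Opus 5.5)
We compute $\hsigma(x)=\frac{1}{\sqrt{2^n}}\sum_{y}(-1)^{x\cdot y}\sigma(y)$ directly. The plan is to encode both the restriction to odd $|y|$ and the phase $(-1)^{Q(y)}$ as a single complex phase, and then factor the sum coordinatewise. The key observation is that $Q(y)=\binom{|y|}{2}\bmod 2$ depends only on $k=|y|$. For odd $k$, the value $(-1)^{\binom{k}{2}}$ equals $+1$ when $k\equiv 1 \pmod 4$ and $-1$ when $k\equiv 3 \pmod 4$. This is exactly $\sin(\pi k/2)=\mathrm{Im}(i^k)$. For even $k$, $\mathrm{Im}(i^k)=0$, which matches $\sigma=0$ there. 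Hence for every $y$,
\[
\sigma(y)=\sqrt{2}\,\mathrm{Im}\bigl(i^{|y|}\bigr)=\sqrt{2}\,\mathrm{Im}\Bigl(\prod_{j=1}^n i^{y_j}\Bigr).
\]

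Substituting this and pulling the imaginary part out of the real-coefficient sum gives
\[
\hsigma(x)=\frac{\sqrt2}{\sqrt{2^n}}\,\mathrm{Im}\prod_{j=1}^n\bigl(1+(-1)^{x_j} i\bigr).
\]
Each factor equals $1+i=\sqrt2 e^{i\pi/4}$ when $x_j=0$, and $1-i=\sqrt2 e^{-i\pi/4}$ when $x_j=1$. The product is therefore $2^{n/2}e^{i\frac{\pi}{4}(n-|x|-|x|)}=2^{n/2}e^{i\frac{\pi}{4}(n-2|x|)}$. The powers of $2$ cancel, leaving $\hsigma(x)=\sqrt2\sin\bigl(\frac{\pi}{4}(n-2|x|)\bigr)$, as claimed.

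For the second claim: when $n$ is odd, $n-2|x|$ is odd, so $\frac{\pi}{4}(n-2|x|)$ is an odd multiple of $\pi/4$. Its sine is therefore $\pm\frac{1}{\sqrt2}$, which gives $|\hsigma(x)|=1$.

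The only step needing care is the identification $\sigma(y)=\sqrt2\,\mathrm{Im}(i^{|y|})$, i.e.\ checking the sign of $(-1)^{\binom{k}{2}}$ for odd $k$ modulo $4$. This is a two-case check ($k=4m+1$ and $k=4m+3$). The rest is routine factorization of a product over coordinates.
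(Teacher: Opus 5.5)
Your proof is correct and follows the paper's argument essentially step for step. The paper likewise uses $Q(y)\equiv\binom{|y|}{2} \pmod 2$ to write $\sigma(y)=\sqrt{2}\,\mathrm{Im}(i^{|y|})$, factors the Fourier sum coordinatewise into $(1-i)^{|x|}(1+i)^{n-|x|}=2^{n/2}e^{i\frac{\pi}{4}(n-2|x|)}$, and reads off $\sqrt{2}\sin\bigl(\frac{\pi}{4}(n-2|x|)\bigr)$; the odd-$n$ case then follows because $n-2|x|$ is odd.
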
 

Finally, the functions $\rho_0,\rho_1,\hsigma$ are efficiently computable. Therefore, with this choice of functions, we can use Proposition~\ref{Proposition:2} to show that there exists an $\IQP$ circuit performing a single query to $O_{f,g}$ that accepts with probability

\begin{align*}
	P_{acc} & = \frac{1}{2} + \sqrt{\frac{1}{2^{3n + 2}}}\sum_{x,y \in \zo^n} f(x)g(y) {\rho_0(x) \rho_1(y)}\sigma(x+y) \\
	& = \frac{1}{2} + \frac{1}{\sqrt{2}}\Phi_{\odd}(f,g) \quad \text{with} \quad \Phi_{\odd}(f,g) = \sqrt{\frac{1}{2^{3n}}}\sum_{\substack{x,y \in \zo^n \\ |x+y| \ \odd}} (-1)^{x \cdot y} f(x)g(y).
\end{align*}

A different choice of functions $\rho_0,\rho_1,\sigma$ handles the case $|x+y|$ even, again for odd $n$, yielding

$$ P_{acc} = \frac{1}{2} + \frac{1}{\sqrt{2}}\Phi_{\even}(f,g) \quad \text{with} \quad \Phi_{\even}(f,g) = \sqrt{\frac{1}{2^{3n}}} \sum_{\substack{x,y \in \zo^n \\ |x+y| \ \even}} (-1)^{x \cdot y} f(x)g(y).$$

In order to conclude, we run one of the two $\IQP$ circuits presented above uniformly at random.  Since $\Phi(f,g) = \Phi_\odd(f,g) + \Phi_{\even}(f,g)$, we can average over these two circuits accepting probabilities to have an $\IQP$ computation that accepts with probability $\frac{1}{2} + \frac{1}{2\sqrt{2}}\Phi(f,g)$. This gives the proof of Theorem~\ref{Theorem:Main} when $n$ is odd.

\begin{remark}This is where we need classical pre-processing in order to flip the random coin. Since the accepting set $F = \{0\} \times T \cup \{1\} \times \overline{T}$ will depend on the randomness, we are not able to compile this in a single $\IQP$ circuit without pre-processing. It would be interesting whether there is a way to compile this construction into a single $\IQP$ circuit without this random coin.
\end{remark}

\subsubsection{Other circuit constructions}

\paragraph{The $n$ even case.} The construction above handles odd $n$. For even $n$, we do not know of functions $\rho_0,\rho_1,\sigma$ satisfying the required conditions directly. Instead, we reduce to the odd case by a simple padding argument: given functions $f,g : \zo^{n} \rightarrow \mo$, we extend them to $\zo^{n+1}$ by adding a dummy variable that does not affect their output. Since $n+1$ is odd, the previous construction applies. This padding costs a constant factor in the bias, and a direct calculation shows that we can perform an $\IQP$ computation for the even case that accepts with probability

$$ P_{acc} = \frac{1}{2} + \frac{1}{4}\Phi(f,g).$$

\paragraph{Dealing with $|\Phi(f,g)|$.} Many formulations of the $2$-\textsc{Forrelation} problem consider the quantity $|\Phi(f,g)|$ rather than $\Phi(f,g)$. We handle this at the cost of doubling the number of queries. For example, when $n$ is odd, we have an $\IQP$ circuit that accepts with probability $\frac{1}{2} + \frac{1}{2\sqrt{2}}\Phi(f,g)$ using a single query to $O_{f,g}$. The idea is simple: run two independent copies of the circuit and accept if both give the same outcome. The resulting acceptance probability is 
$$ P_{acc} = \left(\frac{1}{2} + \frac{1}{2\sqrt{2}}\Phi(f,g)\right)^2 + \left(\frac{1}{2} - \frac{1}{2\sqrt{2}}\Phi(f,g)\right)^2 = \frac{1}{2} + \frac{1}{4}\Phi^2(f,g).$$
We can also make the same construction for $n$ even, and we get our second theorem.

\begin{restatable}{theorem}{TheoremAbsolute}
	\label{Theorem:Absolute}
	Let functions $f,g : \zo^n \rightarrow \mo$. There exists an $\IQP$ computation making $2$ quantum queries to $O_{f,g}$ that accepts with probability
	\begin{align*}
		P_{acc} & = \frac{1}{2} + \frac{1}{4} \Phi^2(f,g) & \text{if } n \ \odd \\
		P_{acc} & = \frac{1}{2} + \frac{1}{8} \Phi^2(f,g) & \text{if } n \ \even 
	\end{align*}
\end{restatable}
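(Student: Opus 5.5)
The plan is to reduce Theorem~\ref{Theorem:Absolute} to Theorem~\ref{Theorem:Main} by running two independent copies of the single-query computation and comparing their outcomes. Write $p = \frac{1}{2} + c\,\Phi(f,g)$ for the acceptance probability given by Theorem~\ref{Theorem:Main}. Here $c = \frac{1}{2\sqrt{2}}$ if $n$ is odd and $c = \frac14$ if $n$ is even.

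The new $\IQP$ computation works as follows. In classical pre-processing, we independently sample the random coins of two runs of the computation of Theorem~\ref{Theorem:Main}. These coins choose between the $\Phi_{\odd}$ and $\Phi_{\even}$ circuits, or play the analogous role in the padded construction. We then execute the two resulting $\IQP$ circuits, each making one query to $O_{f,g}$, for a total of $2$ queries. For each run, classical post-processing checks membership of the measured string in that run's accepting set $F$. Finally, we accept iff both runs accept or both reject.

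This is still an $\IQP$ computation in the sense of $\BPP^{\IQP}$: two oracle calls to $\IQP$ circuits with efficient classical pre- and post-processing. Membership in $F$ stays efficiently decidable because $\hsigma$ is efficiently computable, as in Proposition~\ref{Proposition:2}. One could alternatively place both copies side by side in a single $2(n+1)$-qubit $\IQP$ circuit, since a tensor product of $\IQP$ circuits is $\IQP$. I would keep the sequential description to match the definition used in the paper.

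The two runs use independent coins and independent measurements, so they are independent Bernoulli trials with success probability $p$. Hence
\[
P_{acc} = p^2 + (1-p)^2 = \tfrac12 + 2\left(p-\tfrac12\right)^2 = \tfrac12 + 2c^2\,\Phi^2(f,g).
\]
For $n$ odd, $2c^2 = 2\cdot\frac18 = \frac14$. For $n$ even, $2c^2 = 2\cdot\frac1{16} = \frac18$. These are exactly the claimed values.

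The only point needing care is independence. The random coin of Theorem~\ref{Theorem:Main} must be resampled for each copy; otherwise the mixture would not factor. Even so, both copies have the same marginal $p$, because $p$ is already averaged over the coin. The rest is routine algebra, so I do not expect a genuine obstacle.
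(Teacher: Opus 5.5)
Your proposal is correct and matches the paper's proof. Like the paper, you run the computation of Theorem~\ref{Theorem:Main} twice independently, accept iff the two outcomes agree, and obtain $p^2+(1-p)^2=\frac12+2c^2\Phi^2(f,g)$; your point that each copy must resample its own random coin is a valid detail the paper leaves implicit in the word ``independently''.
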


\subsubsection{Application to the Raz--Tal oracle separation}

The Raz--Tal separation between $\BQP$ and $\PH$~\cite{RT22} is based on an instantiation of $2$-\textsc{Forrelation}. Let $\mathcal{F}$ be the set of functions from $\zo^n$ to $\{-1,1\}$, let $\mathcal{U}$ be the uniform distribution on $\mathcal{F}^2$, and let $\mathcal{D}$ be the specific distribution on $\mathcal{F}^2$ constructed by Raz and Tal. The distinguishing task is: given oracle access to $(f,g)$ drawn from either $\mathcal{D}$ or $\mathcal{U}$, determine which is the case. The two distributions are separated by their forrelation:
$$\mathbb{E}_{(f,g) \leftarrow \mathcal{D}}[\Phi^2(f,g)] \ge \Omega(\frac{1}{n^2}) \qquad \text{and} \qquad \mathbb{E}_{(f,g) \leftarrow \mathcal{U}}[\Phi^2(f,g)] = \mathrm{negl}(n).$$
Raz and Tal proved the separation using a bound on $|\Phi(f,g)|$ and the expected values for $\Phi^2(f,g)$ above are due to Jacobs and Mehraban~\cite{JM24}. Our Theorem~\ref{Theorem:Absolute} shows that there is an $\IQP$ computation performing $2$ calls to $O_{f,g}$ that accepts with probability
$$ P_{acc} = \left\{
\begin{array}{ll}
	\frac{1}{2} + \Omega(\frac{1}{n^2}) & \text{if } (f,g) \leftarrow \mathcal{D} \\
	\frac{1}{2} + \negl(n) & \text{if } (f,g) \leftarrow \mathcal{U}
\end{array}
\right.$$

In order to conclude, we consider the problem of distinguishing between $m$ pairs of functions taken from $D^{\otimes m}$ or $\mathcal{U}^{\otimes m}$ with $m = \Theta(n^4)$. We can run the $\IQP$ procedure $m$ times in order to get a constant advantage. Since Raz and Tal showed that no $\PH$ algorithm can do so, we obtain the following

\begin{theorem}
	There exists an $\IQP$ computation that solves the $m$-fold Raz--Tal distinguishing problem with $m = \Theta(n^4)$ and constant advantage, implying that there exists an oracle $O$ such that $\left(\BPP^{\IQP}\right)^O \not\subseteq \PH^O$. Moreover, our calls to the $\IQP$ oracle are non-adaptive.
\end{theorem}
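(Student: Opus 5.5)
The plan is to combine Theorem~\ref{Theorem:Absolute} with the Raz--Tal lower bound through a standard amplification and diagonalization argument. First I fix the distinguishing task. An instance consists of $m$ pairs $(f_1,g_1),\dots,(f_m,g_m)$, drawn either from $\mathcal{D}^{\otimes m}$ or from $\mathcal{U}^{\otimes m}$. For each pair $i$, the algorithm runs the two-query procedure of Theorem~\ref{Theorem:Absolute} independently and records a bit $Z_i$, equal to $1$ if that procedure accepts.

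By Theorem~\ref{Theorem:Absolute} and the second-moment bounds of Jacobs--Mehraban quoted above, the accept probabilities average as follows, using the worse constant $\frac18$ from the even case:
\[
\mathbb{E}[Z_i]=\tfrac12+\Omega(1/n^2)\ \text{under}\ \mathcal{D},
\qquad
\mathbb{E}[Z_i]=\tfrac12+\mathrm{negl}(n)\ \text{under}\ \mathcal{U}.
\]
The bits $Z_i$ are independent because the pairs are independent and the runs use fresh randomness. The classical post-processing accepts when $\sum_i Z_i \ge m\bigl(\tfrac12+\tfrac{c}{2n^2}\bigr)$. A Hoeffding bound with gap $\Theta(1/n^2)$ and $m=\Theta(n^4)$ samples then gives constant distinguishing advantage.

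All $2m$ $\IQP$ calls can be chosen before seeing any outcome. The only classical pre-processing is the random coin of Theorem~\ref{Theorem:Main}, which fixes $D$ and $F$, so the calls are non-adaptive. Each call is an $\IQP$ circuit on $n+1$ or $n+2$ qubits with one oracle query, so the whole procedure runs in $\BPP^{\IQP}$ relative to the oracle.

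Second, I import the lower bound. Raz--Tal show that any $\mathsf{AC}^0$ circuit of quasi-polynomial size has advantage at most $\mathrm{polylog}(N)/\sqrt{N}$, with $N=2^n$, in distinguishing $\mathcal{D}$ from $\mathcal{U}$. Their argument handles the $m$-fold version as well: a hybrid argument over the $m$ coordinates, or their direct-product formulation, loses only a factor $m=\mathrm{poly}(n)$, which keeps the advantage negligible. The standard correspondence between $\PH$ machines and $\mathsf{AC}^0$ circuits on the oracle truth table then implies that no $\PH^O$ machine distinguishes the two distributions with constant advantage.

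Finally, I build the language and oracle by diagonalization, following Raz--Tal. For each input length $n$, the oracle $O$ encodes $m(n)$ pairs of functions on $\zo^n$, sampled from $\mathcal{D}^{\otimes m}$ or $\mathcal{U}^{\otimes m}$ according to a hidden bit $b_n$. Let $L=\{1^n : b_n=1\}$. Our procedure decides $L$ with bounded error, averaged over the oracle. A probabilistic argument plus enumeration over the countably many $\PH$ machines then fixes a single $O$ on which every $\PH^O$ machine errs at some length, while $L\in(\BPP^{\IQP})^O$.

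The main obstacle is this last step. The $\IQP$ algorithm succeeds only on average over $\mathcal{D}$, not for every function pair, so $L$ must be defined with a promise. Either the algorithm must succeed with high probability on the chosen oracle, or, as Raz--Tal do, one works with the distributional statement and concentration. I would follow their construction verbatim, replacing their $\BQP$ algorithm with ours; the only properties it uses are the acceptance gap and the query count, and both are supplied by Theorem~\ref{Theorem:Absolute}.
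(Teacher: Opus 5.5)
Your proposal is correct and follows the paper's proof. Theorem~\ref{Theorem:Absolute} combined with the Jacobs--Mehraban bounds gives an $\Omega(1/n^2)$ acceptance gap. Then $m=\Theta(n^4)$ independent, non-adaptive repetitions with a Chernoff--Hoeffding threshold test amplify it to constant advantage, and the Raz--Tal $\mathsf{AC}^0$ lower bound rules out $\PH$.

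Your hybrid argument for the $m$-fold lower bound, and your discussion of turning the distributional statement into a language, are details the paper leaves implicit in its citation of Raz--Tal. The language step also goes through without enlarging $m$. Since $\Phi^2\in[0,1]$, a multiplicative Chernoff bound under $\mathcal{D}$ and Markov's inequality under $\mathcal{U}$ show that each sampled block has its average $\Phi^2$ on the correct side of the threshold except with summable probability.
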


We do require the class $\BPP^{\IQP}$ because we perform a polynomial number of $\IQP$ circuits with classical pre and post-processing.

\subsubsection{Fourier growth bounds for $\IQP$ circuits}

We think of the oracle $O_{f,g}$ as querying the string $z \in \{-1,1\}^{\zo^{n+1}}$ encoding the truth tables of $f$ and $g$: $z_{0x} = f(x)$ and $z_{1x} = g(x)$ for $x \in \zo^n$. A single query to $O_{f,g}$ corresponds to one access to this string, and the acceptance probability $p(z)$ of a $d$-query algorithm is a multilinear polynomial in $z$. The level-$\ell$ Fourier growth $L_{1,\ell}(p) = \sum_{|S|=\ell} |\widehat{p}(S)|$ quantifies the size of the level-$\ell$ coefficients of this polynomial.

We prove an accepting-space bound for the acceptance probability $p$ of a $d$-query $\IQP$ circuit with accepting set $F$ which, to the best of our knowledge, is new.
Namely, for every restriction $\rho$,
$$\left\|\widehat{p_{|\rho}}\right\|_1 \leq \sqrt{|F|}.$$
In particular,
$$L_{1,\ell}(p_{|\rho}) \leq \sqrt{|F|}$$
for every level $\ell$. 

Since solving $2$-\textsc{Forrelation} with constant advantage requires $L_{1,2}(p_{|\rho})=\Omega(\sqrt N)$ for some restriction $\rho$~\cite{BS21}, any constant-advantage $\IQP$ circuit for $2$-\textsc{Forrelation} must satisfy $|F|=\Omega(N).$
Our $\IQP$ circuits have an accepting set of size $\Theta(N)$, matching this bound up to constants.
This contrasts with the standard $\BQP$ algorithm for $2$-\textsc{Forrelation}, which accepts on a single outcome.
Thus, while a one-dimensional accepting subspace suffices for $\BQP$, any constant-advantage $\IQP$ circuit requires an accepting subspace of dimension $\Omega(N)=\Omega(2^n)$.
This does not make the post-processing inefficient: in our construction, membership in $F$ depends only on the Hamming weight of the measured output modulo $4$, and can therefore be decided in $O(n)$ classical time.

We also derive the level-wise bound
$$L_{1,\ell}(p_{|\rho})\leq\sum_{\substack{a+b=\ell\\0\leq a,b\leq d}}\sqrt{\min\left\{\binom{d}{a}\binom{N}{b},\binom{d}{b}\binom{N}{a}\right\}},$$
and hence, for every fixed $\ell$,
$$L_{1,\ell}(p_{|\rho}) = O_\ell\!\left(d^{\lceil \ell/2\rceil/2}N^{\lfloor \ell/2\rfloor/2}\right).$$
In particular,
$$L_{1,2}(p|_\rho) = O(\sqrt{dN}), \quad L_{1,3}(p|_\rho) = O(d\sqrt{N}), \quad L_{1,6}(p|_\rho) = O(d^{3/2}N^{3/2}).$$
We obtained this bound independently and later observed that its more precise binomial form is already implicit in the non-adaptive analysis of Girish, Sinha, Tal, and Wu~\cite{GSTW24}.
We give a short direct proof specialized to $\IQP$, which retains the sharper asymmetric estimates appearing before the final geometric-mean relaxation in their analysis.

\subsection{Discussion}\label{Section:Discussion}

The $2$-\textsc{Forrelation} problem is an important problem in quantum complexity: it achieves the optimal quantum-vs-classical query separation~\cite{AA15,SSW21} and its power lies beyond the polynomial hierarchy~\cite{RT22}, yet it is not believed to be $\BQP$-complete. Our result shows that the quantum power needed to solve $2$-\textsc{Forrelation} is already available in $\IQP$ computations, a model significantly more restricted than $\BQP$.  

A natural question is whether the power of $\IQP$ circuits extends to higher orders of the Forrelation problem. The Fourier growth bounds give a direct limitation: $3$-\textsc{Forrelation} cannot be solved with constant advantage by $\IQP$ circuits making $d=o(N^{1/3})$ queries. This is consistent with the more general separation obtained from the fact that $\IQP$ is weaker than $\hBQP$~\cite{JM24} and that $3$-\textsc{Forrelation} cannot be solved in $\hBQP$~\cite{Gir25}. Thus $\IQP$ remains far from the full power of $\BQP$.

One interesting application is for demonstrating quantum advantage. $\IQP$ circuits have been proposed as a possible near-term solution to quantum advantage via sampling problems~\cite{BJS11,BMS16,PLS+24}. However, verifying the output of a sampling experiment can be essentially as hard as simulating it~\cite{AC17}, and attempts to build verifiable $\IQP$ protocols using hidden secrets have proven fragile~\cite{Kah23,BCJ25}. Because $2$-\textsc{Forrelation} is a \emph{decision} problem, it bypasses these verification bottlenecks. By choosing instances where the oracles $O_{f,g}$ are efficiently implementable, our result could lead to new proposals for quantum advantage experiments that are both implementable quantumly and classically verifiable.

	\section{Preliminaries}\label{Section:2}
	\subsection{Fourier analysis on $\zo^n$}
	We will be working on the boolean cube $\zo^n$. For $x \in \zo^n$, its Hamming weight $|x|$ is the number of $1$'s contained in the string $x$. For a set $S \subseteq \zo^n$, we denote by $\one_S$ its indicator function. For a function $f : \zo^n \rightarrow \mathbb{C}$, we define its Fourier transform
	$$ \hf(x) = \frac{1}{\sqrt{2^n}} \sum_{y \in \zo^n} (-1)^{x \cdot y} f(y),$$
	where $x \cdot y = \sum_{i=1}^n x_i y_i \mod2$ is the canonical inner product on $\zo^n$. With this normalization, we have the following properties:
	\begin{itemize}
		\item $H^{\otimes n} \left(\sum_{x \in \zo^n} f(x) \ket{x}\right) = \sum_{x \in \zo^n} \hf(x) \ket{x}$ where $H$ is the Hadamard gate. 
		\item (Parseval's identity) $\sum_{x \in \zo^n} |f(x)|^2 = \sum_{x \in \zo^n} |\hf(x)|^2$.
		\item $\widehat{\widehat{f} \ } = f$.
	\end{itemize}
	We also have the useful relations.
	\begin{claim}
		For each $x \in \zo^n$, 
		\begin{align*}
			\sum_{y \in \zo^n} (-1)^{x \cdot y} & = 
			\left\{ 
			\begin{array}{cl}
				2^n & \text{if } x = 0^n \\
				0 & \text{otherwise}
			\end{array}
			\right. \\
			\sum_{\substack{y \in \zo^n \\ |y| \even}} (-1)^{x \cdot y} & = 
			\left\{ 
			\begin{array}{cl}
				2^{n-1} & \text{if } x = 0^n \vee x = 1^n\\
				0 & \text{otherwise}
			\end{array}
			\right. \\
			\sum_{\substack{y \in \zo^n \\ |y| \odd}} (-1)^{x \cdot y} & = 
			\left\{ 
			\begin{array}{cl}
				2^{n-1} & \text{if } x = 0^n \\
				-2^{n-1} & \text{if } x = 1^n \\
				0 & \text{otherwise}
			\end{array}
			\right. 
		\end{align*}
	\end{claim}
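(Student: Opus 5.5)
The first sum is the standard orthogonality of characters, and I will prove it by factoring over coordinates. Since $(-1)^{x\cdot y}=\prod_{i=1}^n (-1)^{x_i y_i}$, the sum splits as
\[
\sum_{y\in\zo^n}(-1)^{x\cdot y}=\prod_{i=1}^n\bigl(1+(-1)^{x_i}\bigr).
\]
Each factor is $2$ if $x_i=0$ and $0$ if $x_i=1$. The product is therefore $2^n$ when $x=0^n$ and $0$ otherwise.

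For the parity-restricted sums, I will write the parity indicator as a character. Specifically, $\one_{|y|\ \even}=\frac12\bigl(1+(-1)^{|y|}\bigr)$ and $(-1)^{|y|}=(-1)^{1^n\cdot y}$. This gives
\[
\sum_{\substack{y\in\zo^n\\ |y|\ \even}}(-1)^{x\cdot y}=\frac12\sum_{y}(-1)^{x\cdot y}+\frac12\sum_{y}(-1)^{(x+1^n)\cdot y}.
\]
Applying the first identity to both terms, the first contributes $2^{n-1}$ exactly when $x=0^n$, and the second contributes $2^{n-1}$ exactly when $x+1^n=0^n$, i.e.\ $x=1^n$. Assuming $n\ge1$, these two cases are distinct, which gives the stated values. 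The odd case uses $\frac12\bigl(1-(-1)^{|y|}\bigr)$ instead, so the second term appears with a minus sign and yields $-2^{n-1}$ at $x=1^n$.

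No step is a genuine obstacle. The only care needed is the degenerate case $n=0$, where $0^n=1^n$ and the two contributions would merge; the paper implicitly assumes $n\ge1$.
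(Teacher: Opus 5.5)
Your proof is correct and complete. You factor over coordinates to get the first identity, then write the parity indicator as $\frac12\bigl(1\pm(-1)^{1^n\cdot y}\bigr)$, which reduces the other two sums to the first one evaluated at $x$ and at $x+1^n$.

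The paper states this claim in the preliminaries without any proof, treating it as a standard fact, so there is no different route to compare against. Your caveat that $n\ge 1$ is needed is accurate: for $n=0$ the even sum equals $1$, not $2^{-1}$. In the style the paper later uses in the proof of Lemma~\ref{Lemma:OddBent}, one could equivalently evaluate $\sum_y (-1)^{x\cdot y}z^{|y|}=(1+z)^{n-|x|}(1-z)^{|x|}$ at $z=\pm1$ and take half-sums and half-differences. That is the same computation as yours.
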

	
	\subsection{Fourier expansion on oracle bits}
	We will also view acceptance probabilities as functions of the oracle bits.
	Any function $f:\{-1,1\}^{\zo^n}\to\mathbb{R}$ admits a unique multilinear expansion
	$$f(x)=\sum_{S\subseteq\zo^n}\widehat f(S)\chi_S(x),\quad\chi_S(x)=\prod_{i\in S}x_i.$$
	We measure the distribution of the Fourier mass of $f$ across the levels $|S|$ through the following quantities.
	\begin{definition}($\alpha$-signed level-$\ell$ Fourier growth)
	For $\ell\ge 0$, and signs $\alpha_S\in\{\pm 1\}$, for $S\subset\{0,1\}^n$ with $|S|=\ell$, the $\alpha$-signed level-$\ell$ Fourier growth of $f$ is
	$$L_{1,\ell}^\alpha(f) = \sum_{\substack{S\subseteq\zo^n\\ |S|=\ell}} \alpha_S \widehat{f}(S).$$
	\end{definition}
	\begin{definition}[Level-$\ell$ Fourier growth]
		For $\ell\ge 0$, the level-$\ell$ Fourier growth of $f$ is
		$$L_{1,\ell}(f) = \sum_{\substack{S\subseteq\zo^n\\ |S|=\ell}} |\widehat{f}(S)| = \max_{\alpha} L_{1,\ell}^\alpha(f).$$
	\end{definition}
	We always use the natural lexicographic order on binary strings, equivalently identifying them with the corresponding integers written in binary.
	
	\subsection{Matrix norms}
	For a complex matrix $A=(a_{i,j})\in\mathbb{C}^{N\times N}$, we denote by
	$$\norm{A}_F=\left(\sum_{i,j}|a_{i,j}|^2\right)^{1/2}$$
	its Frobenius norm, and by
	$$\norm{A}_{op}=\sup_{\norm{x}_2=1}\norm{Ax}_2$$
	its operator norm.
	\\
	We will use the following standard facts in section~\ref{sec:fourier}:
	\begin{itemize}
		\item For $v\in\mathbb{C}^N$, $\norm{Av}_2^2\leq\norm{A}_{\text{op}}^2\norm{v}_2^2$.
		\item For $A\in\mathcal{C}^{N\times N}$, $\norm{A}_F\leq\sqrt{N}\,\norm{A}_{op}$.
		\item For $A,B\in\mathbb{C}^{N\times N}$, $|Tr(AB)| \leq \norm{A}_F\norm{B}_F$.
		\item For $P$ an orthogonal projector of rank $r$, $\norm{P}_F=\sqrt{r},\ \norm{P}_{op}=1$.
	\end{itemize}
	
	\subsection{$\IQP$ related definitions}
	
	\begin{definition}[IQP circuit]
		In the relativized setting, an $\IQP^O$ circuit is an $\IQP$ circuit on $m$ qubits of the form $C = H^{\otimes m} D H^{\otimes m}$, whose diagonal block $D$ is a polynomial-size circuit composed of gates diagonal in the computational basis together with diagonal phase-query gates to the oracle $O$.
		A description of $C$ is a string $w \in \{0,1\}^*$ specifying $m$ and the type, number, and positions of the gates comprising $D$. We write $D_w$ for the diagonal operator described by $w$.
	\end{definition}
	
	\begin{definition}[IQP oracle]
		The $\IQP$ oracle $\mathcal{O}_{\IQP}$ takes as input a description $w$ of an $\IQP$ circuit on $m$ qubits and returns a single sample $y \in \{0,1\}^m$ drawn from the output distribution of that circuit:
		$$ \Pr[y \mid w] = \left|\bra{y} H^{\otimes m} D_w H^{\otimes m} \ket{0^m}\right|^2.$$
	\end{definition}
	
	\begin{definition}[$\BPP^{\IQP}$]
		A language $L$ is in $\BPP^{\IQP}$ if there exists a randomized polynomial-time classical algorithm $A$ with access to $\mathcal{O}_{\IQP}$ such that for every $x \in \{0,1\}^*$, $\Pr[A^{\mathcal{O}_{\IQP}}(x) = 1] \ge \frac{2}{3}$ if $x \in L$ and $\Pr[A^{\mathcal{O}_{\IQP}}(x) = 1] \le \frac{1}{3}$ otherwise. The probability is over the internal randomness of $A$ and the responses of $\mathcal{O}_{\IQP}$.
	\end{definition}
	
	A $\BPP^\IQP$ circuit is what we will call an $\IQP$ computation. In our case, each $\IQP$ circuit will call the oracle $O_{f,g}$ exactly once so the number of calls to the $\IQP$ oracle will be equal to the number of calls to the $O_{f,g}$ oracle. 
	
	\section{A single query IQP circuit for $2$-\textsc{Forrelation}}
\label{Section:3}

In this section we prove our main theorem, which we formally restate.

\TheoremMain*

\subsection{Accepting probability of $\IQP$ circuits}
\label{Section:IQPCircuits}

As discussed in the introduction, $\IQP$ circuits are of the following form

$$
\begin{quantikz}
	\lstick{$\ket{0^{n+1}}$} & 
	\gate{H^{\otimes (n+1)}} & \gate{O_{f,g}} & \gate{D} & 
	\gate{H^{\otimes (n+1)}} & \gate{\Pi_F}
\end{quantikz}
$$

\noindent where $D = \sum_{x \in \zo^{n+1}} \rho(x)\kb{x}$ with $|\rho| = 1$ is a diagonal operator in the computational basis, and we include the classical post-processing $\Pi_F$ which accepts when the output of the $\IQP$ circuit is in $F \subseteq \zo^{n+1}$.
A circuit of this form has two degrees of freedom: the choice of $D$ and the choice of $F$. We restate the key proposition from the introduction, which identifies sufficient conditions on $D$ and $F$ for the accepting probability to take a useful form.

\PropositionMain*

\begin{proof}
	Start with functions $\rho_0,\rho_1$ and $\sigma$ satisfying the stated conditions. Let 
	$$D = \sum_{x \in \zo^n} {\rho_0(x)} \kb{0x} + {\rho_1(x)} \kb{1x}.$$
	By assumption, $D$ has an efficient implementation using diagonal gates. Let $T = \{x : \hsigma(x) = 1\}$ meaning $\hsigma = (-1)^{\one_{\overline{T}}}$ and $F = \{0\} \times T \cup \{1\} \times \overline{T}$. Since $\hsigma$ is efficiently computable, membership in $F$ can be decided efficiently. We now compute the accepting probability of this circuit.
	
	After the first Hadamards, we get $\frac{1}{\sqrt{2^{n+1}}} \sum_{b \in \zo,\, x \in \zo^n} \ket{b}\ket{x}$. After applying $O_{f,g}$ and then $D$, we obtain
	$$ \frac{1}{\sqrt{2^{n+1}}} \sum_{x \in \zo^n} \rho_0(x) f(x) \ket{0}\ket{x} + \rho_1(x) g(x) \ket{1}\ket{x}.$$
	We set $a(x) = \rho_0(x) f(x)$ and $b(x) = \rho_1(x) g(x)$, so the state is
	$$ \frac{1}{\sqrt{2^{n+1}}} \sum_{x \in \zo^n}  a(x) \ket{0}\ket{x} + b(x) \ket{1}\ket{x}.$$
	Applying the final $H^{\otimes(n+1)}$, we get
	$$ \frac{1}{\sqrt{2^{n+2}}} \sum_{x \in \zo^n}  (\ha(x) + \hb(x)) \ket{0}\ket{x} + (\ha(x) - \hb(x)) \ket{1}\ket{x},$$
	Notice that $\ha$ and $\hb$ are real valued.  The accepting probability is then
	\begin{align*}
		P_{acc} & = \frac{1}{2^{n+2}} \left(\sum_{x \in T} (\ha(x) + \hb(x))^2 + \sum_{x \in \overline{T}} (\ha(x) - \hb(x))^2\right) \\
		& = \frac{1}{2^{n+2}} \sum_{x \in \zo^n} \ha^2(x) + \hb^2(x) + 2 \sum_{x \in T} \ha(x)\hb(x) - 2 \sum_{x \in \overline{T}} \ha(x)\hb(x).
	\end{align*}
	Since $a,b : \zo^n \to \mo$, Parseval's identity gives $\sum_x \ha^2(x) = \sum_x \hb^2(x) = 2^n$. Recalling that $\hsigma = (-1)^{\one_{\overline{T}}}$, we get
	$$P_{acc} = \frac{1}{2} + \frac{1}{2^{n+1}}\sum_{x \in \zo^n} \ha(x)\hb(x)\hsigma(x).$$
	To conclude, we expand via the convolution formula:
	\begin{align*}
		\sum_{x \in \zo^n} \ha(x) \hb(x) \hsigma(x)
		& = \frac{1}{2^{3n/2}}\sum_{r,s,t \in \zo^n} \sum_{x \in \zo^n}(-1)^{x \cdot (r + s + t)} a(r) b(s) \sigma(t) \\
		& = \frac{1}{2^{n/2}} \sum_{\substack{r,s,t \\ r+s+t = 0^n}} a(r)b(s)\sigma(t)
		= \frac{1}{2^{n/2}} \sum_{x,y \in \zo^n} a(x)b(y)\sigma(x + y).
	\end{align*}
	Substituting $a(x) = \rho_0(x)f(x)$ and $b(x) = \rho_1(x)g(x)$, we obtain
	$$P_{acc} = \frac{1}{2} + \sqrt{\frac{1}{2^{3n+2}}}\sum_{x,y \in \zo^n} f(x)g(y){\rho_0(x) \rho_1(y)}\sigma(x+y). \qedhere$$
\end{proof}

In order to get an advantage for the $2$-\textsc{Forrelation} problem, we would like to find functions $\rho_0,\rho_1,\sigma$ such that $\rho_0(x)\rho_1(y)\sigma(x+y) = (-1)^{x \cdot y}$ for all $x,y$. We show this is not possible.

\subsection{Impossibility result and parity splitting}

\begin{proposition}
	There do not exist functions $\rho_0,\rho_1 : \zo^n \rightarrow \mo$ and $\sigma : \zo^n \rightarrow \mathbb{R}$ such that 
	$$ \forall x,y \in \zo^n, \quad {\rho_0(x) \rho_1(y)}\sigma(x+y) = (-1)^{x \cdot y}.$$
\end{proposition}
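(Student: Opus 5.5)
We argue by contradiction. Suppose $\rho_0,\rho_1,\sigma$ satisfy $\rho_0(x)\rho_1(y)\sigma(x+y)=(-1)^{x\cdot y}$ for all $x,y$. Since the right-hand side never vanishes, $\sigma$ takes values in $\mo$ as well. The idea is to take a quadruple of pairs $(x,y)$ for which every $\rho$ factor appears squared and the $\sigma$ factors cancel, so that the product of the left-hand sides is $1$. The product of the inner-product signs will then be $-1$, which is the contradiction. This is the same mechanism behind the correction term $|x||y|$ in the identity for $Q$.

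\textbf{Step 1: eliminate $\sigma$.} Substitute $y \mapsto x+z$, so that $\sigma(x+y)=\sigma(z)$, and get
\[
\rho_0(x)\rho_1(x+z)\sigma(z) = (-1)^{x\cdot(x+z)} = (-1)^{|x| + x\cdot z}.
\]
Compare the points $x$ and $x'$ with the same $z$. Dividing the two equations removes $\sigma(z)$:
\[
\rho_0(x)\rho_0(x')\,\rho_1(x+z)\rho_1(x'+z) = (-1)^{|x|+|x'| + (x+x')\cdot z}.
\]

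\textbf{Step 2: choose small supports to force a parity obstruction.} The simplest choice is $n\ge 1$, $x=0$, $x'=e_1$, with $z\in\{0,e_1\}$.

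For $z=0$ the equation reads $\rho_0(0)\rho_0(e_1)\rho_1(0)\rho_1(e_1) = (-1)^{1}$.

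For $z=e_1$ the left side involves $\rho_1(e_1)\rho_1(0)$, so it is the same product. The right side is $(-1)^{1+1}=+1$.

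The two equations give $-1=+1$, a contradiction.

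Checking directly, the four original equations $(x,y)\in\{(0,0),(0,e_1),(e_1,0),(e_1,e_1)\}$ have right-hand sides $1,1,1,-1$. The left-hand sides multiply to
\[
\rho_0(0)^2\rho_0(e_1)^2\rho_1(0)^2\rho_1(e_1)^2\,\sigma(0)^2\sigma(e_1)^2,
\]
which is positive because $\sigma$ is real. The product of the right-hand sides is $-1$. This gives the contradiction, so the proof reduces to this four-point computation, and the argument even allows $\sigma$ to be an arbitrary real-valued function.

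\textbf{Expected obstacle.} The only delicate point is confirming that $\sigma$ appears to an even power in the product, specifically $\sigma(0)$ twice and $\sigma(e_1)$ twice. Since $\sigma$ is real, its square is nonnegative, which handles this. I also intend to remark that the obstruction is exactly the $|x||y|$ term, since $|e_1||e_1|$ is odd. This is why restricting to pairs with $|x+y|$ of fixed parity, which excludes such quadruples, lets the $Q$ construction work.
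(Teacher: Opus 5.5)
Your proof is correct and is essentially the paper's argument. The paper adds the $\mathbb{F}_2$-exponent relations at the four pairs $(x,x)$, $(x,x+z)$, $(x+z,x)$, $(x+z,x+z)$ for general $x,z$ to conclude $|z|\equiv 0 \pmod 2$, and your quadruple is exactly this specialized to $x=0^n$, $z=e_1$. Your product-of-squares packaging is a nice touch, since it shows the contradiction holds even for arbitrary real-valued $\rho_0,\rho_1,\sigma$, without first reducing to $\pm1$ values.
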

\begin{proof}
	Assume for contradiction that such a triple $(\rho_0,\rho_1,\sigma)$ exists. Since $\sigma(x+y) = (-1)^{x\cdot y}\rho_0(x)\rho_1(y) \in \{-1,1\}$, there exists $\gamma : \zo^n \rightarrow \zo$ with $\sigma = (-1)^\gamma$. Similarly, we write $\rho_b = (-1)^{\alpha_b}$ with $\alpha_b : \zo^n \rightarrow \zo$ for $b \in \{0,1\}$. The relation becomes
	$$x\cdot y = \alpha_0(x) + \alpha_1(y) + \gamma(x+y) \mod2.$$
	Setting $z = x+y$ gives
	\begin{equation}\label{eq:ik}
		x\cdot(x+z) = \alpha_0(x) + \alpha_1(x+z) + \gamma(z) \mod2,
	\end{equation}
	and setting $z = 0^n$ gives
	\begin{equation}\label{eq:ii}
		|x| = \alpha_0(x) + \alpha_1(x) + \gamma(0^n) \mod2.
	\end{equation}
	Adding \eqref{eq:ii} and \eqref{eq:ik} yields
	\begin{equation}\label{eq:rho1}
		x\cdot z = \alpha_1(x) + \alpha_1(x+z) + \gamma(z) + \gamma(0^n) \mod2.
	\end{equation}
	Replacing $x$ by $x+z$ in \eqref{eq:rho1} gives
	\begin{equation}\label{eq:rho1k}
		x\cdot z + |z| = \alpha_1(x) + \alpha_1(x+z) + \gamma(z) + \gamma(0^n) \mod2.
	\end{equation}
	Adding \eqref{eq:rho1} and \eqref{eq:rho1k} yields $|z| \equiv 0 \mod2$ for all $z$, a contradiction.
\end{proof}

The obstruction comes from Hamming weight parity. However, if we restrict to pairs $(x,y)$ where $|x+y|$ has a fixed parity, we can circumvent the above impossibility result. We define
\begin{align*}
	\Phi_{\odd}(f,g) & = \frac{1}{2^{3n/2}}\sum_{\substack{x,y \in \zo^n \\ |x + y| \ \odd}} f(x)g(y)(-1)^{x \cdot y}, \\
	\Phi_{\even}(f,g) & = \frac{1}{2^{3n/2}}\sum_{\substack{x,y \in \zo^n \\ |x + y| \ \even}} f(x)g(y)(-1)^{x \cdot y}.
\end{align*}
Note that $\Phi(f,g) = \Phi_{\odd}(f,g) + \Phi_{\even}(f,g)$. Our strategy is to find functions that give a bias proportional to $\Phi_{\odd}(f,g)$ and, separately, to $\Phi_{\even}(f,g)$, then combine the two circuits.

\subsection{The quadratic function $Q$}

\begin{definition}
	For $x = (x_1,\dots,x_n) \in \zo^n$, define $Q(x) = \sum_{1 \le i < j \le n} x_ix_j$.
\end{definition}

The function $Q$ satisfies $Q(x) \equiv \binom{|x|}{2} \mod2$, so it depends only on the Hamming weight. $x \mapsto (-1)^{Q(x)}$ is the simplest example of a bent function over $\zo^n$ (for $n$ even). The key identity we exploit is the following.

\begin{proposition}\label{prop:Q-identity}
	For all $x, y \in \zo^n$,
	\[
	Q(x) + Q(y) + Q(x+y) = x \cdot y + |x||y| \mod2.
	\]
\end{proposition}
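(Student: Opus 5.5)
The plan is to expand $Q(x+y)$ directly over $\mathbb{F}_2$ and compare coordinatewise. Write $(x+y)_i = x_i + y_i \pmod 2$. Then
\[
Q(x+y) = \sum_{i<j} (x_i+y_i)(x_j+y_j) = \sum_{i<j} x_ix_j + \sum_{i<j} y_iy_j + \sum_{i<j} (x_iy_j + y_ix_j) \pmod 2,
\]
which is valid since reduction mod $2$ commutes with products. The first two sums are $Q(x)$ and $Q(y)$. Adding $Q(x)+Q(y)$ to both sides therefore cancels them modulo $2$, leaving
\[
Q(x)+Q(y)+Q(x+y) \equiv \sum_{i<j}(x_iy_j + x_jy_i) = \sum_{i\neq j} x_iy_j \pmod 2 .
\]

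It then remains to identify the off-diagonal sum. The full double sum factors:
\[
\sum_{i,j} x_iy_j = |x|\,|y|.
\]
Removing the diagonal terms gives
\[
\sum_{i\neq j} x_iy_j = |x||y| - \sum_i x_iy_i .
\]
Modulo $2$, the diagonal sum is exactly $x\cdot y$, and subtraction equals addition. Hence the right-hand side is $x\cdot y + |x||y| \pmod 2$, which is the claim.

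There is no real obstacle here. The only point needing care is that $Q$ is defined as an integer polynomial in the bits and reduced mod $2$, while $x+y$ is the bitwise XOR. Expanding $(x_i+y_i)(x_j+y_j)$ with XOR replaced by integer addition is legitimate only modulo $2$, so every step should be kept as a congruence mod $2$. As a sanity check, one can verify the identity with the closed form $Q(x)\equiv\binom{|x|}{2}$ together with $|x+y| = |x|+|y|-2(\text{overlap})$. That route is messier, so I would use the coordinate expansion as the proof.
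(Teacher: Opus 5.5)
Your proof is correct and follows the paper's argument: expand $Q(x+y)$ coordinatewise mod $2$, cancel the $x_ix_j$ and $y_iy_j$ terms, and identify the remaining cross terms $\sum_{i\neq j} x_iy_j$ as $|x||y| - \sum_i x_iy_i \equiv |x||y| + x\cdot y \pmod 2$. Your remark that XOR may be replaced by integer addition only modulo $2$ is the right point of care; the paper handles it by working mod $2$ throughout.
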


\begin{proof}
	Expanding directly,
	\begin{align*}
		Q(x) + Q(y) + Q(x + y) & = \sum_{1 \le i < j \le n} x_ix_j + y_iy_j + (x_i + y_i)(x_j + y_j) \mod2 \\
		& = \sum_{1 \le i < j \le n} (x_i y_j + y_i x_j) \mod2 \\
		& 
		= \Bigl(\sum_{i} x_i\Bigr)\Bigl(\sum_{j} y_j\Bigr) - \sum_{i} x_iy_i \mod2 \\
		& = |x||y| + x \cdot y \mod2. \qedhere
	\end{align*}
\end{proof}

In particular, when $|x+y|$ is odd, at least one of $|x|$, $|y|$ must be even, so $|x||y| \equiv 0 \mod2$ and Proposition~\ref{prop:Q-identity} gives $Q(x)+Q(y)+Q(x+y) = x \cdot y$. This is the key identity that makes $Q$ useful for our construction.

\begin{remark}
	A $cZ$ gate on qubits $i,j$ applies the phase $(-1)^{x_i x_j}$ to the computational basis state $\ket{x}$.
	Therefore the phase $(-1)^{Q(x)}$ is implemented using $\binom{n}{2}=O(n^2)$ commuting diagonal two-qubit gates using that
	$$
    \prod_{1\le i<j\le n} cZ_{i,j}\ket{x}
    =
    \prod_{1\le i<j\le n}(-1)^{x_i x_j}\ket{x}
    =
	(-1)^{\sum_{1\le i<j\le n} x_ix_j}\ket{x}
	=
    (-1)^{Q(x)}\ket{x}.
	$$
\end{remark}

\subsection{Main theorem for $n$ odd}

\subsubsection{$n$ odd, $\Phi_{\odd}(f,g)$}

\begin{proposition}\label{Proposition:PhiOdd}
	Let $n$ be an odd integer and let $f,g : \zo^n \rightarrow \mo$. There exists an $\IQP$ circuit performing a single query to $O_{f,g}$ that accepts with probability $\frac{1}{2} + \frac{1}{\sqrt{2}}\Phi_{\odd}(f,g)$.
\end{proposition}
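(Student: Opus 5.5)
We will apply Proposition~\ref{Proposition:2} with $\rho_0=\rho_1=(-1)^{Q}$ and with $\sigma(z)=\sqrt{2}(-1)^{Q(z)}$ if $|z|$ is odd, $\sigma(z)=0$ otherwise. The first hypothesis of Proposition~\ref{Proposition:2} holds by the preceding remark. The diagonal operator $\sum_{b,x}(-1)^{Q(x)}\kb{bx}$ is the product of the $\binom{n}{2}$ gates $cZ_{i,j}$ acting on the last $n$ qubits. It is independent of the control bit $b$, so it is implementable with $O(n^2)$ commuting diagonal gates.

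Next we identify the bias. By Proposition~\ref{prop:Q-identity}, $\rho_0(x)\rho_1(y)\sigma(x+y)=\sqrt{2}(-1)^{x\cdot y+|x||y|}$ whenever $|x+y|$ is odd. Since $|x+y|\equiv|x|+|y| \pmod 2$, oddness forces one of $|x|,|y|$ to be even, so $|x||y|\equiv 0$. For even $|x+y|$ the product vanishes. Substituting into the formula of Proposition~\ref{Proposition:2} gives
\[
P_{acc}=\tfrac12+\sqrt{\tfrac{1}{2^{3n+2}}}\,\sqrt2\sum_{|x+y|\ \odd}(-1)^{x\cdot y}f(x)g(y)=\tfrac12+\tfrac{1}{\sqrt2}\Phi_{\odd}(f,g).
\]

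The main step is the bentness condition: we must show $\hsigma(x)\in\{-1,1\}$ for odd $n$ and that $\hsigma$ is efficiently computable. Because $\sigma$ depends only on Hamming weight, we write $\sigma(z)=\sqrt2\,\mathrm{Re}\text{-type}$ expression using $(-1)^{\binom{k}{2}}$, which has period $4$ in $k$: it equals $1,1,-1,-1$ for $k\equiv 0,1,2,3$. On odd $k$ it equals $+1$ for $k\equiv1$ and $-1$ for $k\equiv3 \pmod 4$, which is $\sin(\pi k/2)$. Hence $\sigma(z)=\sqrt2\,\mathrm{Im}(i^{|z|})$, and this also vanishes correctly on even weights.

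Now $\sum_z(-1)^{x\cdot z}i^{|z|}=(1+i)^{n-|x|}(1-i)^{|x|}=2^{n/2}e^{i\pi(n-2|x|)/4}$, because each coordinate contributes $1+i$ or $1-i$. Therefore
\[
\hsigma(x)=2^{-n/2}\sqrt2\,\mathrm{Im}\bigl(2^{n/2}e^{i\pi(n-2|x|)/4}\bigr)=\sqrt2\sin\bigl(\tfrac{\pi}{4}(n-2|x|)\bigr).
\]
When $n$ is odd, $n-2|x|$ is odd, so the sine equals $\pm1/\sqrt2$ and $\hsigma(x)=\pm1$. It depends only on $(n-2|x|)\bmod 8$, so it is computable in $O(n)$ time. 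Both hypotheses of Proposition~\ref{Proposition:2} are therefore satisfied, and the claimed acceptance probability follows. This sign-of-sine computation is the only genuinely nontrivial check; the rest is substitution.
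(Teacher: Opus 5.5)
Your proposal is correct and follows the paper's proof essentially step for step. It uses the same choice $\rho_0=\rho_1=(-1)^Q$ with $\sigma=\sqrt{2}(-1)^Q$ on odd weights, the same appeal to Proposition~\ref{Proposition:2} and the identity of Proposition~\ref{prop:Q-identity}, and the same bentness computation via $\sigma(z)=\sqrt{2}\,\mathrm{Im}(i^{|z|})$ and the coordinatewise factorization $(1+i)^{n-|x|}(1-i)^{|x|}$. The stray ``$\mathrm{Re}$-type'' phrase should simply read $\mathrm{Im}$, as you then use.
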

\begin{proof}
	Our goal is to use Proposition~\ref{Proposition:2}. We choose the following functions
	\begin{itemize}
		\item $\rho_0 = \rho_1 = (-1)^Q$.
		\item $\sigma(x) = \left\{
		\begin{array}{cl}
			\sqrt{2} (-1)^{Q(x)} & \text{if } |x| \ \odd \\
			0 & \text{otherwise}
		\end{array}\right.$
	\end{itemize}
	$\rho_0,\rho_1$ are easily computable. We now show that $\sigma$ satisfies the bentness condition and that $\hsigma$ is efficiently computable. Both these conditions are captured by the following lemma
	\begin{lemma}\label{Lemma:OddBent}
		$\hsigma(x) = \sqrt{2}\sin\!\left(\frac{\pi}{4}\left(n - 2|x|\right)\right)$. In particular, when $n$ is odd, $|\hsigma(x)| = 1$ for all $x \in \zo^n$.
	\end{lemma}
	\begin{proof}
		Since $Q(x) \equiv \binom{|x|}{2} \mod2$, $\sigma$ depends only on the Hamming weight. We get $\sigma(x) = \sqrt{2}$ if $|x| \equiv 1 \pmod{4}$ and $\sigma(x) = -\sqrt{2}$ if $|x| \equiv 3 \pmod{4}$, and $\sigma(x) = 0$ otherwise. This can be written in closed form as $\sigma(x) = \sqrt{2}\sin(\frac{\pi}{2}|x|)$, which we rewrite as $\sigma(x) = \sqrt{2}\,\Im(e^{i \frac{\pi}{2} |x|})$, where $\Im(\cdot)$ denotes the imaginary part. We then compute 
		\begin{align*}
			\hsigma(x) & = \sqrt{2}\,\Im \left(\frac{1}{\sqrt{2^n}}\sum_{y \in \zo^n} (-1)^{x \cdot y}e^{i\frac{\pi}{2}|y|}\right) \\
			& = \sqrt{2}\,\Im \left(\frac{1}{\sqrt{2^n}}\sum_{y \in \zo^n} \prod_{j=1}^n (-1)^{x_j y_j} e^{i\frac{\pi}{2} y_j}\right) \\
			& = \sqrt{2}\,\Im \left(\frac{1}{\sqrt{2^n}} \prod_{j = 1}^n \left(1 + (-1)^{x_j} e^{i \frac{\pi}{2}}\right)\right) \\
			& = \sqrt{2}\,\Im\left(\frac{1}{\sqrt{2^n}} \left(1 - i\right)^{|x|} \left(1 + i\right)^{n - |x|}\right) 
			\\ & =  \sqrt{2}\,\Im\left(\left(e^{i\frac{\pi}{4}}\right)^{n - 2|x|}\right) \\
			& = \sqrt{2}\,\sin\left(\frac{\pi}{4}\left(n - 2|x|\right)\right).
		\end{align*} 
	\end{proof}
	
	Since $|\hsigma(x)| = 1$ for all $x$, the image of $\hsigma$ is $\{-1,1\}$, and $\hsigma$ is efficiently computable. We can apply Proposition~\ref{Proposition:2} to obtain an $\IQP$ computation with accepting probability
	\begin{align*}
		P_{acc} & = \frac{1}{2} + \sqrt{\frac{1}{2^{3n+2}}} \sum_{x,y \in \zo^n} f(x)g(y) \rho_0(x)\rho_1(y)\sigma(x + y) \\
		& = \frac{1}{2} + \sqrt{\frac{1}{2^{3n+1}}} \sum_{\substack{x,y \in \zo^n \\ |x+y| \ \odd}} f(x)g(y) (-1)^{Q(x) + Q(y) + Q(x+y)} \\
		& = \frac{1}{2} + \sqrt{\frac{1}{2^{3n+1}}} \sum_{\substack{x,y \in \zo^n \\ |x+y| \ \odd}} f(x)g(y) (-1)^{x \cdot y + |x||y|} \\
		& = \frac{1}{2} + \sqrt{\frac{1}{2^{3n+1}}} \sum_{\substack{x,y \in \zo^n \\ |x+y| \ \odd}} f(x)g(y) (-1)^{x \cdot y} \\
		& = \frac{1}{2} + \frac{1}{\sqrt{2}}\Phi_{\odd}(f,g),
	\end{align*}
	where in the fourth line we used that $|x+y|$ odd implies $|x||y| \equiv 0 \mod2$. \qedhere
\end{proof}

\subsubsection{$n$ odd, $\Phi_{\even}(f,g)$}

\begin{proposition}\label{Proposition:PhiEven}
	Let $n$ be odd and let $f,g : \zo^n \rightarrow \mo$. There exists an $\IQP$ circuit performing a single query to $O_{f,g}$ that accepts with probability $\frac{1}{2} + \frac{1}{\sqrt{2}}\Phi_{\even}(f,g)$.
\end{proposition}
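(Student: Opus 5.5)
The plan is to mirror the proof of Proposition~\ref{Proposition:PhiOdd}: choose $\rho_0,\rho_1,\sigma$ satisfying the hypotheses of Proposition~\ref{Proposition:2}, with $\sigma$ supported on even-weight strings. The one change is in $\rho_0$, which must absorb the correction term $|x||y|$ from Proposition~\ref{prop:Q-identity}. When $|x+y|$ is even, $|x|$ and $|y|$ have the same parity, so $|x||y| \equiv |x| \pmod 2$. Proposition~\ref{prop:Q-identity} then gives
\[
Q(x)+|x|+Q(y)+Q(x+y) \equiv x\cdot y \pmod 2 \qquad \text{whenever } |x+y| \text{ is even}.
\]
So I will take $\rho_0(x) = (-1)^{Q(x)+|x|}$ and $\rho_1(y) = (-1)^{Q(y)}$. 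For $\sigma$ I will take $\sigma(z) = \sqrt{2}(-1)^{Q(z)}$ if $|z|$ is even and $\sigma(z)=0$ otherwise.

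Implementability of the diagonal operator is immediate. The phase $(-1)^{Q}$ uses the $\binom{n}{2}$ $cZ$ gates from the earlier remark, and the extra factor $(-1)^{|x|}$ is a single-qubit $Z$ on each qubit. For the branch $b=0$ these $Z$'s are controlled on the first qubit being $0$, which is again a product of diagonal $Z$ and $cZ$ gates. Moreover $\rho_0(x)\rho_1(y)\sigma(x+y)$ equals $\sqrt2(-1)^{x\cdot y}$ when $|x+y|$ is even and $0$ otherwise.

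The main step is the bentness lemma, which I would prove analogously to Lemma~\ref{Lemma:OddBent}. Since $Q(z)\equiv\binom{|z|}{2}$, for even $|z|$ we get $\sigma(z)=\sqrt2$ if $|z|\equiv 0 \pmod 4$ and $\sigma(z)=-\sqrt2$ if $|z|\equiv 2 \pmod 4$. Hence $\sigma(z)=\sqrt2\cos(\frac{\pi}{2}|z|)=\sqrt2\,\mathrm{Re}(i^{|z|})$. The same product computation as in Lemma~\ref{Lemma:OddBent}, taking real parts instead of imaginary parts, gives
\[
\hsigma(x)=\sqrt2\cos\!\left(\tfrac{\pi}{4}(n-2|x|)\right).
\]
For $n$ odd, $n-2|x|$ is odd, so the cosine is $\pm 1/\sqrt2$ and $|\hsigma(x)|=1$. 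This value depends only on $|x| \bmod 4$, so it is efficiently computable.

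Applying Proposition~\ref{Proposition:2} then gives
\[
P_{acc} = \frac12 + \sqrt{\frac{1}{2^{3n+2}}}\cdot\sqrt2 \sum_{\substack{x,y \in \zo^n \\ |x+y| \ \even}} f(x)g(y)(-1)^{x\cdot y} = \frac12+\frac{1}{\sqrt2}\Phi_{\even}(f,g),
\]
as required. The only delicate points are the parity bookkeeping in $\rho_0$ and the sign in the cosine evaluation, and both are routine.
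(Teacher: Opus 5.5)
Your proposal is correct and matches the paper's proof essentially step for step: the same $\sigma$, the same cosine bentness computation giving $\hsigma(x)=\sqrt2\cos\!\left(\tfrac{\pi}{4}(n-2|x|)\right)$, and the same application of Proposition~\ref{Proposition:2}. The only difference is that the paper writes $|x||y|\equiv|y|$ and puts the correction into $\rho_1(y)=(-1)^{Q(y)+|y|}$ rather than into $\rho_0$. This choice is immaterial by symmetry, and it makes the extra gates plain $cZ$ gates controlled on the first qubit being $1$.
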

\begin{proof}
	To capture $\Phi_{\even}$, we need $\sigma$ supported on even-weight inputs. When $|x+y|$ is even, $|x| \equiv |y| \mod2$, so $|x||y| \equiv |y|^2 \equiv |y| \mod2$, which means the extra term $|x||y|$ in Proposition~\ref{prop:Q-identity} can be cancelled by modifying $\rho_1$. We choose:
	\begin{itemize}
		\item $\rho_0(x) = (-1)^{Q(x)}$, \quad $\rho_1(x) = (-1)^{Q(x) + |x|}$.
		\item $\sigma(x) = \begin{cases} \sqrt{2}\,(-1)^{Q(x)} & \text{if } |x| \text{ is even,} \\ 0 & \text{otherwise.} \end{cases}$
	\end{itemize}
	\begin{lemma}\label{Lemma:EvenBent}
		$\hsigma(x) = \sqrt{2}\cos\left(\frac{\pi}{4}\left(n - 2|x|\right)\right)$. In particular, when $n$ is odd, $|\hsigma(x)| = 1$ for all $x \in \zo^n$.
	\end{lemma}
	\begin{proof}
		We perform a similar calculation to Lemma~\ref{Lemma:OddBent}.  Using $Q(x) \equiv \binom{|x|}{2} \mod2$, we write $\sigma(x) = \sqrt{2}$ if $|x| \equiv 0 \pmod{4}$, $\sigma(x) = -\sqrt{2}$ if $|x| \equiv 2 \pmod{4}$, and $\sigma(x) = 0$ otherwise. This can be written in closed form as $\sigma(x) = \sqrt{2}\cos(\frac{\pi}{2}|x|)$, which we rewrite as $\sigma(x) = \sqrt{2}\,\Re(e^{i \frac{\pi}{2} |x|})$, where $\Re(\cdot)$ denotes the real part. The same computation as in Lemma~\ref{Lemma:OddBent} gives  
		\begin{align*}
			\hsigma(x) & = \sqrt{2}\,\Re\left(\frac{1}{\sqrt{2^n}}\sum_{y \in \zo^n} (-1)^{x \cdot y}e^{i\frac{\pi}{2}|y|}\right) \\
			& = \sqrt{2}\,\Re \left(\frac{1}{\sqrt{2^n}}\sum_{y \in \zo^n} \prod_{j=1}^n (-1)^{x_j y_j} e^{i\frac{\pi}{2} y_j}\right) \\
			& = \sqrt{2}\,\Re\left(\frac{1}{\sqrt{2^n}} \prod_{j = 1}^n \left(1 + (-1)^{x_j} e^{i \frac{\pi}{2}}\right)\right) \\
			& = \sqrt{2}\,\Re\left(\frac{1}{\sqrt{2^n}} \left(1 - i\right)^{|x|} \left(1 + i\right)^{n - |x|}\right) \\
			& =  \sqrt{2}\,\Re\left(\left(e^{i\frac{\pi}{4}}\right)^{n - 2|x|}\right) \\
			& = \sqrt{2}\cos\left(\frac{\pi}{4}\left(n - 2|x|\right)\right).
		\end{align*} 
	\end{proof}
	Applying Proposition~\ref{Proposition:2}, we have an $\IQP$ circuit that accepts with probability
	\begin{align*}
		P_{acc} & = \frac{1}{2} + \sqrt{\frac{1}{2^{3n+2}}} \sum_{x,y \in \zo^n} f(x)g(y) (-1)^{Q(x) + Q(y) + |y|}\sigma(x + y) \\
		& = \frac{1}{2} + \sqrt{\frac{1}{2^{3n+1}}} \sum_{\substack{x,y \in \zo^n \\ |x+y| \ \even}} f(x)g(y) (-1)^{Q(x) + Q(y) + Q(x+y) + |y|} \\
		& = \frac{1}{2} + \sqrt{\frac{1}{2^{3n+1}}} \sum_{\substack{x,y \in \zo^n \\ |x+y| \ \even}} f(x)g(y) (-1)^{x \cdot y + |x||y| + |y|} \\
		& = \frac{1}{2} + \sqrt{\frac{1}{2^{3n+1}}} \sum_{\substack{x,y \in \zo^n \\ |x+y| \ \even}} f(x)g(y) (-1)^{x \cdot y}
		= \frac{1}{2} + \frac{1}{\sqrt{2}}\Phi_{\even}(f,g),
	\end{align*}
	where in the fourth line we used that $|x+y|$ even implies $|x| + |y| \equiv 0 \mod2$ and thus $|x||y| + |y| \equiv |y|(|y|+1) \mod 2 \equiv 0 \mod2$. \qedhere
\end{proof}

\subsection{Proving the main theorem}

\begin{proposition}\label{Proposition:MainOdd}
	Let $n$ be odd. Let $f,g : \zo^{n} \rightarrow \mo$. There exists an $\IQP$ computation performing a single query to $O_{f,g}$ that accepts with probability
	$$ P_{acc} = \frac{1}{2} + \frac{1}{2\sqrt{2}}\Phi(f,g).$$
\end{proposition}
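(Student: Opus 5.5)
The plan is to combine the two circuits from Proposition~\ref{Proposition:PhiOdd} and Proposition~\ref{Proposition:PhiEven} using one classical random coin in the pre-processing. The $\IQP$ computation works as follows. First, flip a uniform bit $c \in \zo$. If $c = 0$, run the circuit of Proposition~\ref{Proposition:PhiOdd}: its diagonal block uses $\rho_0=\rho_1=(-1)^Q$, and its accepting set $F$ is built from $\hsigma(x)=\sqrt{2}\sin(\frac{\pi}{4}(n-2|x|))$. If $c=1$, run the circuit of Proposition~\ref{Proposition:PhiEven}: its diagonal block uses $\rho_0=(-1)^Q$ and $\rho_1=(-1)^{Q+|x|}$, and its accepting set comes from $\hsigma(x)=\sqrt{2}\cos(\frac{\pi}{4}(n-2|x|))$. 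Finally, output accept exactly when the chosen circuit's measurement outcome lies in its accepting set.

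We then check the three resource requirements.
\begin{itemize}
\item \emph{Single query.} Each branch makes exactly one call to the $\IQP$ oracle, and each such circuit makes exactly one query to $O_{f,g}$. So the whole computation uses a single query.
\item \emph{Efficient diagonal blocks.} Both diagonal blocks are polynomial size. The phase $(-1)^{Q}$ uses $\binom{n}{2}$ $cZ$ gates, as in the remark above. The extra phase $(-1)^{|x|}$ on the $\ket{1}$ branch is a product of $cZ$ gates between the first qubit and each of the other $n$ qubits.
\item \emph{Efficient post-processing.} Membership in either accepting set depends only on the first bit and on $|x| \bmod 8$, so it is decided classically in $O(n)$ time.
\end{itemize}

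By the law of total probability,
\[
P_{acc} = \tfrac12\Big(\tfrac12+\tfrac{1}{\sqrt2}\Phi_{\odd}(f,g)\Big)+\tfrac12\Big(\tfrac12+\tfrac{1}{\sqrt2}\Phi_{\even}(f,g)\Big) = \tfrac12+\tfrac{1}{2\sqrt2}\big(\Phi_{\odd}(f,g)+\Phi_{\even}(f,g)\big).
\]
Every pair $(x,y)$ has $|x+y|$ either odd or even, so $\Phi(f,g)=\Phi_{\odd}(f,g)+\Phi_{\even}(f,g)$. Substituting this into the display gives the claimed $P_{acc}=\frac12+\frac{1}{2\sqrt2}\Phi(f,g)$.

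All the analytic work is already done in the two earlier propositions, so the argument has no real technical obstacle. The one point that needs care is that the computation counts as $\BPP^{\IQP}$: the coin changes both $D$ and $F$, so it must be flipped classically before choosing which circuit description $w$ to submit. This is the pre-processing allowed by the definition of $\BPP^{\IQP}$.
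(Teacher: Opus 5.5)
Your proposal is correct and is essentially the paper's proof: a classical coin selects between the circuits of Propositions~\ref{Proposition:PhiOdd} and~\ref{Proposition:PhiEven}, and averaging with $\Phi(f,g)=\Phi_{\odd}(f,g)+\Phi_{\even}(f,g)$ gives the claimed bias. Your extra efficiency checks are also fine, though membership in $F$ depends only on the first bit and $|x| \bmod 4$, not just $|x| \bmod 8$: shifting $|x|$ by $4$ changes the argument of the sine or cosine by $2\pi$.
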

\begin{proof}
	Draw a uniform random bit $r \in \zo$. If $r = 0$, run the circuit of Proposition~\ref{Proposition:PhiOdd}; if $r = 1$, run the circuit of Proposition~\ref{Proposition:PhiEven}. The overall accepting probability is
	$$ P_{acc} = \frac{1}{2}\!\left(\frac{1}{2} + \frac{1}{\sqrt{2}}\Phi_{\odd}(f,g)\right) + \frac{1}{2}\!\left(\frac{1}{2} + \frac{1}{\sqrt{2}}\Phi_{\even}(f,g)\right) = \frac{1}{2} + \frac{1}{2\sqrt{2}}\Phi(f,g). \qedhere$$
\end{proof}

\paragraph{The $n$ even case.}
When $n$ is even, the function $\sigma$ constructed above no longer satisfies $|\hsigma| = 1$ everywhere, so the bentness condition in Proposition~\ref{Proposition:2} fails. We handle this by reducing to the odd case via a one-bit padding argument.

\begin{definition}
	Given $f,g : \zo^n \rightarrow \mo$, define $\bar{f}, \bar{g} : \zo^{n+1} \rightarrow \mo$ by
	$$ \bar{f}(x_1, \dots, x_{n+1}) = f(x_1, \dots, x_n), \qquad \bar{g}(x_1, \dots, x_{n+1}) = g(x_1, \dots, x_n).$$
\end{definition}

\begin{proposition}\label{Proposition:MainEven}
	Let $n$ be even. Let $f,g : \zo^{n} \rightarrow \mo$. There exists an $\IQP$ computation performing a single query to $O_{f,g}$ that accepts with probability
	$$ P_{acc} = \frac{1}{2} + \frac{1}{4}\Phi(f,g).$$
\end{proposition}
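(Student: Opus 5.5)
Apply Proposition~\ref{Proposition:MainOdd} to the padded functions $\bar f,\bar g:\zo^{n+1}\to\mo$. Since $n$ is even, $n+1$ is odd, so that proposition yields an $\IQP$ computation on $n+2$ qubits with
$$P_{acc}=\frac12+\frac{1}{2\sqrt2}\Phi(\bar f,\bar g).$$
It then remains to relate $\Phi(\bar f,\bar g)$ to $\Phi(f,g)$ and to check that a query to $O_{\bar f,\bar g}$ can be simulated by one query to $O_{f,g}$.

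For the query simulation, note that $O_{\bar f,\bar g}\ket{b}\ket{x_1\dots x_{n+1}}=O_{f,g}\ket{b}\ket{x_1\dots x_n}\otimes\ket{x_{n+1}}$. The padded oracle is therefore just $O_{f,g}$ acting on the first $n+1$ qubits, tensored with the identity on the dummy qubit. This is still a diagonal phase-query gate, so the circuit keeps the $\IQP$ form of Figure~\ref{Figure:6} with exactly one query.

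For the forrelation value, write $x=(x',x_{n+1})$ and $y=(y',y_{n+1})$. The phase factorizes as $(-1)^{x\cdot y}=(-1)^{x'\cdot y'}(-1)^{x_{n+1}y_{n+1}}$, and $\bar f,\bar g$ do not depend on the last coordinate, so
$$\sum_{x_{n+1},y_{n+1}\in\zo}(-1)^{x_{n+1}y_{n+1}}=2.$$
Hence
$$\Phi(\bar f,\bar g)=\frac{2}{2^{3(n+1)/2}}\sum_{x',y'}(-1)^{x'\cdot y'}f(x')g(y')=\frac{2}{2^{3/2}}\Phi(f,g)=\frac{1}{\sqrt2}\Phi(f,g).$$
Substituting gives $P_{acc}=\frac12+\frac{1}{2\sqrt2}\cdot\frac{1}{\sqrt2}\Phi(f,g)=\frac12+\frac14\Phi(f,g)$, as claimed.

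The only step needing care is the query simulation. One must confirm that $D$ and the acceptance set $F$ (membership depends only on $|x|\bmod 4$ and the control bit) remain efficiently implementable on $n+2$ qubits, which is immediate from the odd case. The algebra is a one-line factorization.
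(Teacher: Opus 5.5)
Your proposal is correct and follows the paper's proof: pad to $n+1$ bits, apply Proposition~\ref{Proposition:MainOdd}, and use $\sum_{a,b\in\zo}(-1)^{ab}=2$ to get $\Phi(\bar f,\bar g)=\frac{1}{\sqrt2}\Phi(f,g)$. Your explicit check that $O_{\bar f,\bar g}$ is just $O_{f,g}$ tensored with the identity on the dummy qubit fills in what the paper states in one line.
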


\begin{proof}
	Since $n+1$ is odd, we apply Proposition~\ref{Proposition:MainOdd} to $\bar{f}$ and $\bar{g}$, giving an $\IQP$ computation that accepts with probability $P_{acc} = \frac{1}{2} + \frac{1}{2\sqrt{2}}\Phi(\bar{f},\bar{g})$ and makes a single call to $O_{\bar{f},\bar{g}}$, which can be simulated with a single call to $O_{f,g}$. It remains to relate $\Phi(\bar{f},\bar{g})$ to $\Phi(f,g)$. Writing $x = (x', a)$ and $y = (y', b)$ with $x', y' \in \zo^n$ and $a, b \in \zo$,
	\begin{align*}
		\Phi(\bar{f}, \bar{g}) & = \frac{1}{2^{3(n+1)/2}} \sum_{x', y' \in \zo^n} (-1)^{x' \cdot y'} f(x')\, g(y') \sum_{a, b \in \zo} (-1)^{ab} \\
		& = \frac{2}{2^{3(n+1)/2}} \sum_{x', y' \in \zo^n} (-1)^{x' \cdot y'} f(x')\, g(y') \\
		& = \frac{2}{2\sqrt{2} \cdot 2^{3n/2}} \sum_{x', y' \in \zo^n} (-1)^{x' \cdot y'} f(x')\, g(y')
		\\ & = \frac{1}{\sqrt{2}} \Phi(f,g),
	\end{align*}
	where we used $\sum_{a,b \in \zo} (-1)^{ab} = 2$. Substituting this in $P_{acc}$, we obtain $P_{acc} = \frac{1}{2} + \frac{1}{4}\Phi(f,g)$.
\end{proof}

\noindent In order to put everything together, notice that  Propositions~\ref{Proposition:MainOdd} and~\ref{Proposition:MainEven} together give Theorem~\ref{Theorem:Main}.

\subsection{Proof of the second theorem}
\label{Section:Absolute}

\TheoremAbsolute*

\begin{proof}
	Run the $\IQP$ computation of Theorem~\ref{Theorem:Main} twice independently, and accept if and only if both runs give the same outcome (\textit{i.e.} when they both accept or both reject). If $n$ is odd,
	$$ P_{acc} = \left(\frac{1}{2} + \frac{1}{2\sqrt{2}}\Phi(f,g)\right)^2 + \left(\frac{1}{2} - \frac{1}{2\sqrt{2}}\Phi(f,g)\right)^2 = \frac{1}{2} + \frac{1}{4}\Phi^2(f,g).$$
	If $n$ is even,
	$$ P_{acc} = \left(\frac{1}{2} + \frac{1}{4}\Phi(f,g)\right)^2 + \left(\frac{1}{2} - \frac{1}{4}\Phi(f,g)\right)^2 = \frac{1}{2} + \frac{1}{8}\Phi^2(f,g). \qedhere$$
\end{proof}
	
	\section{Oracle Separation}
	\label{sec:oracle-separation}
	
	In this section we prove that there exists an oracle $O$ $\left(\BPP^{\IQP}\right)^O \not\subseteq \PH^O$, extending the Raz--Tal separation~\cite{RT22}.	We recall the setup of Raz and Tal~\cite{RT22}. Let $\mathcal{F}$ be the set of functions from $\zo^n$ to $\{-1,1\}$. Consider two distributions on $\mathcal{F}^2$: the uniform distribution $\mathcal{U}$, and a structured distribution $\mathcal{D}$ constructed in~\cite{RT22} so that pairs $(f,g) \sim \mathcal{D}$ have large forrelation.
	
	\begin{definition}[Raz--Tal distinguishing problem]
		Given oracle access to a pair $(f,g) \in \mathcal{F}^2$ promised to be drawn from either $\mathcal{D}$ or $\mathcal{U}$, determine which is the case with success probability at least $\frac{2}{3}$.
	\end{definition}
	
	\begin{definition}[$m$-fold Raz--Tal distinguishing problem]
		Given oracle access to a pair $((f_1,g_1),\dots,(f_m,g_m)) \in \mathcal{F}^{2m}$ promised to be drawn from either $\mathcal{D}^{\otimes m}$ or $\mathcal{U}^{\otimes m}$, determine which is the case with success probability at least $\frac{2}{3}$.
	\end{definition}

	The two distributions are separated by their squared forrelation. The following quantitative bounds are due to Jacobs and Mehraban.
	
	\begin{proposition}[\cite{JM24}]\label{Proposition:JM}
		$$\mathbb{E}_{(f,g) \leftarrow \mathcal{D}}\bigl[\Phi^2(f,g)\bigr] \ge \frac{1}{2304\,n^2}
		\qquad \text{and} \qquad
		\mathbb{E}_{(f,g) \leftarrow \mathcal{U}}\bigl[\Phi^2(f,g)\bigr] = \mathrm{negl}(n).$$
	\end{proposition}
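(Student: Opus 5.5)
The statement is Proposition~\ref{Proposition:JM}, attributed to~\cite{JM24}. I would not rederive it from scratch. Instead I would reconstruct the short argument from the Raz--Tal structure of $\mathcal{D}$, treating the uniform case and the $\mathcal{D}$ case separately.

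\textbf{Uniform case.} This is a direct second-moment computation. I would expand
\[
\mathbb{E}_{\mathcal{U}}[\Phi^2(f,g)] = \frac{1}{2^{3n}}\sum_{x,y,x',y'} (-1)^{x\cdot y + x'\cdot y'}\,\mathbb{E}[f(x)f(x')]\,\mathbb{E}[g(y)g(y')].
\]
Independence of the uniform $\pm1$ values forces $x=x'$ and $y=y'$. Hence the expectation equals $2^{2n}/2^{3n} = 2^{-n}$, which is negligible.

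\textbf{Structured case.} Here I would recall the Raz--Tal construction. One samples a Gaussian vector $X\in\mathbb{R}^N$ with covariance $\epsilon I$, where $\epsilon = 1/(c\ln N)$ and $N=2^n$, and sets $Y = H X$ with $H$ the normalized Hadamard matrix. Rounding $(X,Y)$ to $\{-1,1\}^{2N}$ coordinatewise, via independent random signs with means $\mathrm{trnc}(X_i)$ and $\mathrm{trnc}(Y_j)$, gives $(f,g)$.

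The plan is to lower bound $\mathbb{E}[\Phi^2]$ by $(\mathbb{E}[\Phi])^2$ using Jensen, so only the first moment is needed. By independence of the rounding given $(X,Y)$,
\[
\mathbb{E}[\Phi] = \frac{1}{N}\sum_{x,y}H_{xy}\sqrt{N}\,\mathbb{E}\bigl[\mathrm{trnc}(X_x)\,\mathrm{trnc}(Y_y)\bigr].
\]
Without truncation this equals $\frac{1}{\sqrt N}\sum_{x,y}H_{xy}\,\epsilon H_{xy}\sqrt N = \epsilon$, up to the normalization convention. The truncation error is controlled by the Gaussian tail, since each coordinate has variance $\epsilon = O(1/n)$. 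Raz and Tal show that this error is $O(N^{-c'})$, hence negligible, giving $\mathbb{E}_{\mathcal{D}}[\Phi] \ge \epsilon/2$.

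With $\epsilon = 1/(24 n)$, chosen to match the constant $2304 = 48^2$ after tracking $\ln N = n\ln 2$ and the factor $1/2$, squaring gives $\mathbb{E}_{\mathcal{D}}[\Phi^2]\ge 1/(2304 n^2)$.

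\textbf{Main obstacle.} The hard step is the truncation-error bound and the bookkeeping of constants needed to land exactly on $2304$. I would simply import the Raz--Tal estimate $\mathbb{E}_{\mathcal{D}}[\Phi] \ge \epsilon/2$ rather than reprove it. I would then cite~\cite{JM24} for the precise choice of $\epsilon$.
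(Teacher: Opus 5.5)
The paper does not prove this statement; it imports both bounds verbatim from \cite{JM24}. Your reconstruction is a more self-contained route and is correct in substance.

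The uniform bound is an exact computation, $\mathbb{E}_{\mathcal{U}}[\Phi^2(f,g)]=2^{-n}$. The structured bound reduces, via $\mathbb{E}[\Phi^2]\ge(\mathbb{E}[\Phi])^2$, to the Raz--Tal first-moment estimate $\mathbb{E}_{\mathcal{D}}[\Phi]\ge\epsilon/2$. This shows that only the first moment of $\Phi$ under $\mathcal{D}$ matters. It also explains the constant: $\frac{1}{2304n^2}$ is $(\epsilon/2)^2$ with $\ln N$ replaced by its upper bound $n$. What you do not prove is the truncation estimate of \cite{RT22}, which you import. That is acceptable at the level of a cited proposition.

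Two bookkeeping points need fixing.

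First, the normalization of your first-moment formula is off. With $H_{xy}=(-1)^{x\cdot y}/\sqrt{N}$ one has $\Phi(f,g)=\frac{1}{N}\sum_{x,y}H_{xy}f(x)g(y)$, with no extra $\sqrt{N}$. Since $\mathbb{E}[X_xY_y]=\epsilon H_{xy}$, the untruncated value is $\frac{\epsilon}{N}\sum_{x,y}H_{xy}^2=\epsilon$. The expression you actually display evaluates to $\epsilon N$, not $\epsilon$.

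Second, $\epsilon$ is not a parameter you may tune to hit the constant. $\mathcal{D}$ is the Raz--Tal distribution, with $\epsilon=1/(24\ln N)$ fixed. Writing $\epsilon=1/(24n)$ silently changes the distribution. The stated bound follows from the fixed choice because
$(\epsilon/2)^2=\frac{1}{2304(\ln N)^2}=\frac{1}{2304(\ln 2)^2n^2}\ge\frac{1}{2304n^2}$, using $\ln 2<1$.
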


	Raz and Tal~\cite{RT22} proved that this problem is hard for $\PH$.
	
	\begin{proposition}[\cite{RT22}]\label{Theorem:PH}
		Any algorithm in $\PH$ requires $2^{\Omega(n)}$ queries to $O_{f,g}$ to solve the $m$-Raz--Tal distinguishing problem with a non-negligible advantage for any $m = poly(n)$.
	\end{proposition}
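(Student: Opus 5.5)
The plan is to reduce to the circuit lower bound of Raz and Tal and then handle the $m$-fold product by a hybrid argument. The oracle $O_{f,g}$ encodes a string $z\in\{-1,1\}^{2N}$ with $N=2^n$. First I will apply the standard Furst--Saxe--Sipser translation: a $\PH$ machine (fixed level $k$, polynomial running time) with oracle access to $m$ such strings becomes, for each fixed input length, a Boolean circuit on the $2Nm$ oracle bits. That circuit has constant depth $k+O(1)$ and size $2^{\mathrm{poly}(n)} = 2^{\mathrm{polylog}(N)}$, with each bottom fan-in at most $\mathrm{poly}(n)$, the number of queries along a single computation path. So it suffices to show that every $\mathsf{AC}^0$ circuit of quasipolynomial size in $N$ distinguishes $\mathcal{D}^{\otimes m}$ from $\mathcal{U}^{\otimes m}$ with only negligible advantage. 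The statement in terms of $2^{\Omega(n)}$ queries is the contrapositive: making the advantage non-negligible would force a circuit size, hence a number of queries per path, that is exponential in $n$.

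For the single-instance case ($m=1$) I invoke the main theorem of~\cite{RT22} as a black box. It states that for any Boolean circuit $C$ of depth $d$ and size $s$,
\[
\bigl|\mathbb{E}_{\mathcal{D}}[C] - \mathbb{E}_{\mathcal{U}}[C]\bigr| \le \frac{\mathrm{polylog}(s)^{O(d)}}{\sqrt{N}}.
\]
Its proof goes through a random-walk or Brownian-motion interpretation of $\mathcal{D}$ as a truncated Gaussian, combined with Tal's level-$\ell$ Fourier growth bounds $L_{1,\ell}(C) \le \mathrm{polylog}(s)^{(d-1)\ell}$ for $\mathsf{AC}^0$. These bounds give control of all multilinear moments beyond the second; the second-level (forrelation) term contributes the $1/\sqrt N$. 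With $s=2^{\mathrm{poly}(n)}$ and $d=O(1)$, this bound is $\mathrm{poly}(n)/2^{n/2}$, which is negligible.

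For the $m$-fold problem I will use hybrids $H_i = \mathcal{D}^{\otimes i}\otimes \mathcal{U}^{\otimes (m-i)}$. Consecutive hybrids $H_{i-1}$ and $H_i$ differ only in coordinate $i$. Averaging over the other $m-1$ coordinates, some fixing of their oracle bits maximizes the gap between $H_{i-1}$ and $H_i$. Hard-wiring that fixing into the circuit gives a circuit of no larger size or depth on the $2N$ bits of $(f_i,g_i)$, which distinguishes $\mathcal{D}$ from $\mathcal{U}$ with at least the same gap. Summing the gaps over $i$ by the triangle inequality yields total advantage at most $m\cdot\mathrm{poly}(n)2^{-n/2}$, which is negligible for $m=\mathrm{poly}(n)$.

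The main obstacle is the single-instance bound, specifically the Fourier-growth analysis of $\mathsf{AC}^0$ under the Gaussian-interpolation argument. I will not reprove that part and will import it from~\cite{RT22}. The remaining care is in the query accounting: the FSS circuit must keep bottom fan-in polynomial and total size quasipolynomial, so that $\mathrm{polylog}(s)$ stays polynomial in $n$, and any substantial advantage would indeed require $2^{\Omega(n)}$ queries.
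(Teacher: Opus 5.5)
Your argument is correct and consistent with the paper. The paper gives no proof of its own and simply imports the result from Raz--Tal, while your Furst--Saxe--Sipser translation, plus a hybrid over the $m$ copies that reduces to their single-instance bound $\mathrm{polylog}(s)^{O(d)}/\sqrt{N}$, is the standard derivation of the stated $m$-fold form that the citation implicitly relies on.

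One wording slip: non-negligible advantage forces $\log s$, and hence the number of queries per path, to be $2^{\Omega(n)}$. The circuit size itself would therefore have to be doubly exponential in $n$, not merely exponential.
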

	
	We now present our oracle separation. 
	
	\begin{theorem}\label{Theorem:OracleSeparation}
		There exists an oracle $O$ relative to which $\left(\BPP^{\IQP}\right)^O \not\subseteq \PH^O$.
	\end{theorem}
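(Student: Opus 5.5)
The plan has two parts. First, build a $\BPP^{\IQP}$ algorithm for the $m$-fold Raz--Tal distinguishing problem with $m=\Theta(n^4)$. Second, turn this query separation into an oracle separation by the standard diagonalization used by Raz and Tal~\cite{RT22}.

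\textbf{The algorithm.} For each $i\in\{1,\dots,m\}$, run the procedure of Theorem~\ref{Theorem:Absolute} on $(f_i,g_i)$. This makes two non-adaptive calls to $\mathcal{O}_{\IQP}$, each containing one query to $O_{f_i,g_i}$. Let $X_i\in\{0,1\}$ record whether the two runs agree. Conditioned on $(f_i,g_i)$, we have $\mathbb{E}[X_i]=\frac12+c\,\Phi^2(f_i,g_i)$ with $c\in\{\frac14,\frac18\}$. Averaging over the distribution and using Proposition~\ref{Proposition:JM}:
\begin{itemize}
\item under $\mathcal{D}$, $\mathbb{E}[X_i]\ge \frac12+\frac{c}{2304n^2}$;
\item under $\mathcal{U}$, $\mathbb{E}[X_i]\le\frac12+\mathrm{negl}(n)$.
\end{itemize}
Since the $m$ pairs are independent and the runs are independent, the $X_i$ are i.i.d.\ Bernoulli. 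The algorithm outputs ``$\mathcal{D}$'' iff $\frac1m\sum_i X_i$ exceeds the threshold $\frac12+\frac{c}{4608n^2}$. The gap between the two means is $\delta=\Theta(1/n^2)$. By Hoeffding, the error probability is at most $2e^{-2m(\delta/2)^2}$, which is below $1/3$ once $m=\Theta(n^4)$. Everything is classical polynomial time except the $2m$ non-adaptive $\IQP$ calls. This gives the algorithmic claim of the preceding theorem.

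\textbf{The oracle.} I would follow Raz--Tal's standard route. Fix an enumeration of $\PH$ oracle machines, i.e.\ constant-depth $\mathrm{AC}^0$ circuits of size $2^{\mathrm{poly}(n)}$ after the usual translation. For each input length $n$, the oracle $O$ contains a block encoding $m$ pairs $(f_i,g_i)$ on $n$ bits. Define the language
\[
L_O=\{1^n: \text{the length-}n\text{ block of } O \text{ has } \textstyle\frac1m\sum_i \Phi^2(f_i,g_i)\ \text{large}\}.
\]
Equivalently, $O$ is arranged so that its block at length $n$ was drawn from $\mathcal{D}^{\otimes m}$ or $\mathcal{U}^{\otimes m}$. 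Then $L_O\in(\BPP^{\IQP})^O$ by the algorithm above. The $\IQP$ circuits query $O$ through the phase oracle $O_{f_i,g_i}$, which is diagonal and so fits the definition of an $\IQP^O$ circuit.

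For the lower bound, Proposition~\ref{Theorem:PH} says every $\PH$ machine has only negligible advantage in distinguishing $\mathcal{D}^{\otimes m}$ from $\mathcal{U}^{\otimes m}$. A probabilistic argument then handles all machines at once. Choose each block independently from $\mathcal{D}^{\otimes m}$ or $\mathcal{U}^{\otimes m}$ by a fair coin, and use sparse, well-separated input lengths so that a machine run on length $n$ cannot query longer blocks. Then the $j$-th $\PH$ machine errs on length $n$ with probability close to $1/2$. By Borel--Cantelli and a countable union over machines, with probability $1$ every $\PH$ machine fails on infinitely many lengths, so $L_O\notin\PH^O$.

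\textbf{Main obstacle.} The subtle point is that the $\BPP^{\IQP}$ algorithm must be correct on every length, not just on average over the random draw of $O$. Under $\mathcal{U}$ and $\mathcal{D}$ the squared forrelation is only controlled in expectation.

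The first fix to try is to redefine membership directly by $\frac1m\sum_i\Phi^2(f_i,g_i)$ lying above or below a threshold. Concentration then shows the threshold is met with high probability under each distribution. A Markov/Chebyshev bound together with the boundedness $\Phi^2\le1$ controls the measure of instances violating the promise. Conditioning the random oracle on the promise costs only a small loss in the $\PH$ advantage.

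For $\mathcal{D}$, a lower bound on the mean alone does not give concentration. There I would resample until the promise holds, or use Raz--Tal's own promise on $|\Phi|$, absorbing the bad event into the negligible term. With the promise enforced, the Hoeffding argument gives correctness on every length, and Theorem~\ref{Theorem:OracleSeparation} follows.
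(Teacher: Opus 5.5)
Your proposal is correct and follows the paper's proof: you run the two-query procedure of Theorem~\ref{Theorem:Absolute} on each of $m=\Theta(n^4)$ pairs, use Proposition~\ref{Proposition:JM} and a Chernoff/Hoeffding bound to get constant advantage, and invoke Proposition~\ref{Theorem:PH}; the paper treats the conversion from this distributional separation to an oracle separation as standard and does not address worst-case correctness, whereas you sketch it. The one misstep is your claim that under $\mathcal{D}$ a lower bound on the mean gives no concentration: the $\Phi^2(f_i,g_i)$ are i.i.d.\ in $[0,1]$, so Hoeffding gives $\Pr_{\mathcal{D}^{\otimes m}}\bigl[\frac{1}{m}\sum_i \Phi^2(f_i,g_i) < \mu/2\bigr] \le e^{-m\mu^2/2}$ with $\mu=\frac{1}{2304n^2}$, which is a small constant for $m=Cn^4$ with $C$ large (enough for your conditioning argument, since each $\PH$ machine's per-length success probability then stays bounded away from $1$) and negligible for $m=n^5$, so neither resampling nor Raz--Tal's promise on $|\Phi|$ is needed.
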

	
	\begin{proof}
		By Theorem~\ref{Theorem:Absolute}, there exists an $\IQP$ computation making $2$ queries to $O_{f,g}$ that accepts with probability $\frac{1}{2} + \Theta(\Phi^2(f,g))$. Taking expectations and applying Proposition~\ref{Proposition:JM}, the gap in acceptance probability between the two distributions satisfies
		$$\mathbb{E}_{\mathcal{D}}[P_{acc}] - \mathbb{E}_{\mathcal{U}}[P_{acc}] = \Omega\left(\frac{1}{n^2}\right).$$
		Take $m = \Theta(n^4)$ and consider the $m$-fold Raz-Tal distinguishing problem. Repeating the $\IQP$ computation $m$ times independently and using a standard Chernoff bound to amplify the $\Omega(1/n^2)$ gap to a constant advantage uses $\Theta(n^4)$ queries in total. This gives a $\BPP^{\IQP}$ algorithm solving the $m$ fold Raz--Tal distinguishing problem with constant advantage. By Proposition~\ref{Theorem:PH}, no $\PH$ algorithm can do the same, which gives the oracle separation.
	\end{proof}
	
	\begin{remark}
		Our algorithm uses non-adaptive queries to $\mathcal{O}_{\IQP}$: all circuit descriptions are fixed before any oracle responses are received.
	\end{remark}
	\begin{remark}
		$\BPP^{\IQP}$ is at most as powerful as $\BQP$, since $\IQP$ is weaker than $\BQP$ and $\BPP^\BQP = \BQP$. So our separation strengthens the one of Raz and Tal.
	\end{remark}
	
	\section{Fourier growth bounds for $\IQP$}\label{sec:fourier}
In this section, we study the Fourier spectrum of the acceptance probability of a $d$-query $\IQP$ circuit.
Let $x\in\{-1,1\}^{\zo^n}$, and let the oracle be the phase oracle
$$O_x\ket{a}=x_a\ket{a}, \qquad a\in\zo^n.$$
Consider a $d$-query $\IQP$ circuit on $m$ qubits, including any ancilla qubits.

For a computational-basis state $u\in\zo^m$ and $t\in[d]$, let 
$$\pi_t(u)\in\zo^n$$
denote the oracle address queried by the $t$-th oracle call on $\ket{u}$.
The oracle phase accumulated on $\ket{u}$ is therefore $\prod_{t=1}^d x_{\pi_t(u)}.$
Define
$$A(u) = \triangle_{t=1}^d\{\pi_t(u)\} \subseteq \zo^n,$$
where $\triangle$ denotes symmetric difference.
Since $x_a^2=1$, we have 
$$\prod_{t=1}^d x_{\pi_t(u)} = \chi_{A(u)}(x) \quad \text{where } \chi_A(x):=\prod_{a\in A}x_a.$$
Notice that $|A(u)|\le d$.
This description allows the oracle calls to act on arbitrary, possibly overlapping, subsets of the $\IQP$ register.

$$
\begin{quantikz}
	& \gate[2]{O_x} & & \\
	& & \gate[2]{O_x} & \\
	& \gate[2]{O_x} & & \\
	& & &
\end{quantikz}
$$

Let $D=\sum_{u\in\zo^m}e^{i\varphi_u}\ket{u}\!\bra{u}$ denote the input-independent part of the diagonal layer. Immediately before the final Hadamard layer, the state of the circuit is $\ket{\psi_x} = 2^{-m/2} \sum_{u\in\zo^m} e^{i\varphi_u}\chi_{A(u)}(x)\ket{u}.$
For every $A\subseteq\zo^n$, define 
$$\ket{v_A} = 2^{-m/2} \sum_{\substack{u\in\zo^m\\A(u)=A}} e^{i\varphi_u}\ket{u}.$$
We then obtain the Fourier decomposition 
$$\ket{\psi_x} = \sum_{\substack{A\subseteq\zo^n\\|A|\le d}} \chi_A(x)\ket{v_A}.$$
The vectors $\{\ket{v_A}\}_A$ are pairwise orthogonal, since they are supported on disjoint sets of computational-basis states, and $\sum_A\|v_A\|_2^2=1.$

We will also need this decomposition after fixing some of the oracle bits. A \emph{restriction} is a string $\rho\in\{-1,1,*\}^{\zo^n}.$
Let $\Gamma=\{a\in\zo^n:\rho_a=*\}$ be the set of free oracle coordinates. For a function $f:\{-1,1\}^{\zo^n}\to\mathbb{R}$, we denote by $f|_\rho$ the function obtained by fixing every coordinate $a\notin\Gamma$ to $\rho_a$.

For every $A\subseteq\zo^n$, set 
$$\varepsilon_A(\rho) = \prod_{a\in A\setminus\Gamma}\rho_a.$$
Then 
$$\chi_A|_\rho=\varepsilon_A(\rho)\chi_{A\cap\Gamma}.$$
Grouping together characters that become identical under the restriction gives the following decomposition.

\begin{lemma}[Fourier decomposition under restrictions]
\label{lem:restricted-decomposition}
Fix a restriction $\rho$, with set of free coordinates $\Gamma$.
For every $C\subseteq\Gamma$, define
$$\ket{v_C^\rho} = \sum_{\substack{A\subseteq\zo^n\\A\cap\Gamma=C}} \varepsilon_A(\rho)\ket{v_A}.$$
Then
$$\ket{\psi_x^\rho} = \sum_{\substack{C\subseteq\Gamma\\|C|\le d}} \chi_C(x)\ket{v_C^\rho}.$$
Moreover, the vectors $\{\ket{v_C^\rho}\}_C$ are pairwise orthogonal and satisfy $\sum_C\|v_C^\rho\|_2^2=1.$
\end{lemma}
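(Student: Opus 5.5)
The plan is to substitute the restriction into the unrestricted Fourier decomposition $\ket{\psi_x} = \sum_{|A|\le d}\chi_A(x)\ket{v_A}$ and then regroup terms. Here $\ket{\psi_x^\rho}$ means the state obtained when the coordinates outside $\Gamma$ are fixed according to $\rho$. Since $\chi_A|_\rho = \varepsilon_A(\rho)\chi_{A\cap\Gamma}$, we get
\[
\ket{\psi_x^\rho} = \sum_{A}\varepsilon_A(\rho)\chi_{A\cap\Gamma}(x)\ket{v_A}.
\]
We partition the index set according to $C = A\cap\Gamma \subseteq \Gamma$. This gives exactly $\sum_{C\subseteq\Gamma}\chi_C(x)\ket{v_C^\rho}$. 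Every $A$ appearing with a nonzero $\ket{v_A}$ has $|A|\le d$, because $|A(u)|\le d$. Hence $|C|\le|A|\le d$ whenever $\ket{v_C^\rho}\neq 0$, and restricting the sum to $|C|\le d$ loses nothing.

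For orthogonality, I will use the fact that $\ket{v_A}$ is supported on the computational-basis states $u$ with $A(u)=A$. Consequently, $\ket{v_C^\rho}$ is supported on $\{u : A(u)\cap\Gamma = C\}$. For distinct $C\neq C'$ these supports are disjoint, since each $u$ determines a unique $A(u)$ and hence a unique $A(u)\cap\Gamma$. So the vectors $\ket{v_C^\rho}$ are pairwise orthogonal.

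For the norm identity, I will compute $\|v_C^\rho\|_2^2 = \sum_{A\cap\Gamma=C}\|v_A\|_2^2$. This holds because the $\ket{v_A}$ are mutually orthogonal and $|\varepsilon_A(\rho)|=1$. Summing over $C$ then recovers $\sum_A\|v_A\|_2^2 = 1$. Equivalently, $\sum_C\|v_C^\rho\|^2$ is the squared norm of a state whose amplitudes all have modulus $2^{-m/2}$, so it equals $1$.

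There is no real obstacle in this argument. The only care needed is in the bookkeeping: checking that the regrouping is a genuine partition of the sets $A$, and that the signs $\varepsilon_A(\rho)$ do not disturb orthogonality. They do not, because each $\varepsilon_A(\rho)$ is a scalar attached to a single disjointly supported $\ket{v_A}$.
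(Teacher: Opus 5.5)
Your proposal is correct and follows the paper's proof essentially step for step: you substitute $\chi_A|_\rho=\varepsilon_A(\rho)\chi_{A\cap\Gamma}$, regroup by $C=A\cap\Gamma$, obtain orthogonality from disjointness, sum the norms over the partition, and bound $|C|\le |A|\le d$. The only difference is phrasing: you state disjointness via the computational-basis supports $\{u : A(u)\cap\Gamma=C\}$, while the paper says the $\ket{v_C^\rho}$ are combinations of disjoint families of the orthogonal $\ket{v_A}$.
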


\begin{proof}
Restricting the decomposition of $\ket{\psi_x}$ gives
$$\ket{\psi_x^\rho} = \sum_A \varepsilon_A(\rho) \chi_{A\cap\Gamma}(x) \ket{v_A}.$$
Grouping the terms according to $C=A\cap\Gamma$ yields
$$\ket{\psi_x^\rho} = \sum_C\chi_C(x)\ket{v_C^\rho}.$$

If $C\neq C'$, the vectors $\ket{v_C^\rho}$ and
$\ket{v_{C'}^\rho}$ are linear combinations of disjoint families
of the pairwise orthogonal vectors $\{\ket{v_A}\}_A$. Hence they
are orthogonal. Furthermore,
$$\sum_C\|v_C^\rho\|_2^2 = \sum_C \sum_{\substack{A\subseteq\zo^n\\A\cap\Gamma=C}} \|v_A\|_2^2 = \sum_A\|v_A\|_2^2 =1.$$
Finally, $C=A\cap\Gamma$ implies $|C|\le |A|\le d$.
\end{proof}

\subsection{Accepting-set Fourier growth bound and optimality of the algorithm}
\begin{theorem}[Accepting-set Fourier growth bound]
\label{thm:accepting-space}
Let $p : \{-1,1\}^{\zo^n} \to [0,1]$ be the acceptance probability of a $d$-query $\IQP$ circuit with accepting set $F$.
For every restriction $\rho$, $p|_\rho$ has degree at most $2d$ and
$$\|\widehat{p|_\rho}\|_1 \le \sqrt{|F|}.$$
Consequently, for every $0\leq\ell\leq 2d$, $L_{1,\ell}(p|_\rho)\le \sqrt{|F|},$ and $L_{1,\ell}(p|_\rho)=0$ for $\ell>2d$.
\end{theorem}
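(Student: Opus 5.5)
We start from Lemma~\ref{lem:restricted-decomposition}, which writes the restricted pre-measurement state as $\ket{\psi_x^\rho}=\sum_{C}\chi_C(x)\ket{v_C^\rho}$, with pairwise orthogonal vectors $\ket{v_C^\rho}$ satisfying $\sum_C\|v_C^\rho\|_2^2=1$. Let $P_F=\sum_{y\in F}\kb{y}$ be the projector onto the accepting set, and set $M=H^{\otimes m}P_FH^{\otimes m}$. Then $M$ is an orthogonal projector of rank $|F|$, and
$$p|_\rho(x)=\bra{\psi_x^\rho}M\ket{\psi_x^\rho}=\sum_{C,C'}\chi_C(x)\chi_{C'}(x)\,\bra{v_C^\rho}M\ket{v_{C'}^\rho}.$$
Since $\chi_C\chi_{C'}=\chi_{C\triangle C'}$ and $|C\triangle C'|\le 2d$, the degree bound follows immediately. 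It also follows that each Fourier coefficient is
$$\widehat{p|_\rho}(S)=\sum_{C\triangle C'=S}\bra{v_C^\rho}M\ket{v_{C'}^\rho}.$$

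For the $\ell_1$ bound we will not bound each coefficient separately, because that would cost an extra factor. Instead we choose signs $\alpha_S=\overline{\mathrm{sgn}}(\widehat{p|_\rho}(S))$, which are real since $p$ is real. This gives
$$\|\widehat{p|_\rho}\|_1=\sum_S\alpha_S\widehat{p|_\rho}(S)=\sum_{C,C'}\alpha_{C\triangle C'}\bra{v_C^\rho}M\ket{v_{C'}^\rho}=\mathrm{Tr}(MB),$$
where $B=\sum_{C,C'}\alpha_{C\triangle C'}\ket{v_{C'}^\rho}\!\bra{v_C^\rho}$. By the stated matrix facts, $|\mathrm{Tr}(MB)|\le\|M\|_F\|B\|_F=\sqrt{|F|}\,\|B\|_F$. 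It therefore remains to show $\|B\|_F\le1$.

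Bounding $\|B\|_F$ is the main step. Because the $\ket{v_C^\rho}$ are orthogonal, the rank-one operators $\ket{v_{C'}^\rho}\!\bra{v_C^\rho}$ are orthogonal in the Hilbert--Schmidt inner product, and each has squared Frobenius norm $\|v_C^\rho\|^2\|v_{C'}^\rho\|^2$. Hence
$$\|B\|_F^2=\sum_{C,C'}|\alpha_{C\triangle C'}|^2\,\|v_C^\rho\|^2\|v_{C'}^\rho\|^2\le\Bigl(\sum_C\|v_C^\rho\|^2\Bigr)^2=1.$$
This gives $\|\widehat{p|_\rho}\|_1\le\sqrt{|F|}$. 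The level-wise statements follow because $L_{1,\ell}(p|_\rho)$ is a sub-sum of $\|\widehat{p|_\rho}\|_1$, and it vanishes above degree $2d$.

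The step I expect to be delicate is exactly this Frobenius computation, namely checking that orthogonality of the $v_C^\rho$ makes the cross terms vanish. Naively applying Cauchy--Schwarz term by term would instead give a bound that grows with the number of sets $C$.
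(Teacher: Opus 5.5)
Your proposal is correct and follows the paper's proof essentially step for step. It uses the same projector $M=H^{\otimes m}\Pi_F H^{\otimes m}$, the same sign-weighted operator $\sum_{C,C'}\alpha_{C\triangle C'}\ket{v_{C'}^\rho}\bra{v_C^\rho}$, the bound $|\mathrm{Tr}(MB)|\le\|M\|_F\|B\|_F$, and the Hilbert--Schmidt orthogonality computation giving $\|B\|_F=1$; the only cosmetic difference is that you fix the optimal signs directly instead of taking a supremum over $\alpha$.
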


\begin{proof}
Fix a restriction $\rho$ and use the decomposition
$$\ket{\psi_x^\rho} = \sum_C\chi_C(x)\ket{v_C^\rho},$$
where the vectors $\ket{v_C^\rho}$ are pairwise orthogonal and satisfy
$\sum_C \|v_C^\rho\|_2^2=1.$
Let $M=H^{\otimes m}\Pi_FH^{\otimes m}.$
Then
$$p|_\rho(x) = \sum_{C,D} \chi_{C\triangle D}(x) \bra{v_C^\rho}M\ket{v_D^\rho}.$$
Since $|C|,|D|\le d$, this shows that $p|_\rho$ has degree at most $2d$.

By the dual characterization of the Fourier $\ell_1$ norm,
$$\|\widehat{p|_\rho}\|_1=\sup_{\alpha_S\in\{\pm1\}}\left|\sum_S \alpha_S\widehat{p|_\rho}(S)\right|.$$
For a choice of signs $\alpha=(\alpha_S)_S$, define
$$T^{\rho,\alpha}=\sum_{C,D}\alpha_{C\triangle D}\ket{v_D^\rho}\bra{v_C^\rho}.$$
Then
$$\sum_S\alpha_S\widehat{p|_\rho}(S)=\operatorname{Tr}(MT^{\rho,\alpha}).$$
Hence
$$\left|\operatorname{Tr}(MT^{\rho,\alpha})\right|\le\|M\|_F\|T^{\rho,\alpha}\|_F.$$

Since the vectors $\ket{v_C^\rho}$ are pairwise orthogonal,
the rank-one operators
$\ket{v_D^\rho}\bra{v_C^\rho}$ are pairwise orthogonal in the
Hilbert--Schmidt inner product. Therefore
$$\|T^{\rho,\alpha}\|_F^2 = \sum_{C,D}\|v_C^\rho\|_2^2\|v_D^\rho\|_2^2 =\left(\sum_C\|v_C^\rho\|_2^2\right)^2 =1.$$
Finally, $M$ is an orthogonal projector of rank $|F|$, so $\|M\|_F=\sqrt{|F|}.$
Taking the supremum over $\alpha$ gives $\|\widehat{p|_\rho}\|_1\le\sqrt{|F|}.$
\end{proof}

It is shown in~\cite{RT22,CHLT19} that if $\mathcal{F}$ is a family of $2N$-variate Boolean functions closed under restrictions, then the maximum advantage with which $\mathcal{F}$ solves $2$-\textsc{Forrelation} is at most
$$O\!\left(\frac{L_{1,2}(\mathcal{F})}{\sqrt{N}}\right).$$

\begin{corollary}[Tight threshold on $|F|$]
	\label{cor:F-tight}
	Any $\IQP$ circuit solving $2$-\textsc{Forrelation} with constant advantage must have $|F| = \Omega(N)$. Our construction in \cref{Section:3} achieves $|F| = N$ (odd) or $|F|=2N$ (even).
\end{corollary}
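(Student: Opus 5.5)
The corollary has two parts: a lower bound on $|F|$ for any constant-advantage $\IQP$ circuit, and a count of the accepting sets in our construction.

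\emph{Lower bound.} Let $\mathcal{F}$ be the family of acceptance-probability functions $p|_\rho$, where $p$ is the acceptance probability of an $\IQP$ circuit with accepting set $F$ and $\rho$ ranges over all restrictions. This family is closed under restrictions by construction. The bound of \cite{RT22,CHLT19} applies to Boolean (or bounded, real-valued) functions; acceptance probabilities take values in $[0,1]$, which is the setting in which that bound is used. It says the advantage on $2$-\textsc{Forrelation} is $O(L_{1,2}(\mathcal{F})/\sqrt{N})$. By Theorem~\ref{thm:accepting-space}, $L_{1,2}(p|_\rho)\le\|\widehat{p|_\rho}\|_1\le\sqrt{|F|}$ for every restriction $\rho$, so $L_{1,2}(\mathcal{F})\le\sqrt{|F|}$. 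Constant advantage therefore forces $\sqrt{|F|}/\sqrt{N}=\Omega(1)$, that is, $|F|=\Omega(N)$. The only point needing care is the closure requirement, and it is met by quantifying over all restrictions as above.

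\emph{Odd $n$.} In Proposition~\ref{Proposition:2} the accepting set is $F=\{0\}\times T\cup\{1\}\times\overline{T}$. Since $T$ and $\overline{T}$ partition $\zo^n$, we get $|F|=|T|+|\overline{T}|=2^n=N$. This count does not depend on $\sigma$, so it holds for both circuits, the one of Proposition~\ref{Proposition:PhiOdd} and the one of Proposition~\ref{Proposition:PhiEven}.

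\emph{Even $n$.} The padded construction of Proposition~\ref{Proposition:MainEven} runs the odd-case circuit on $n+1$ bits. The same count then gives $|F|=2^{n+1}=2N$.

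Both values are $\Theta(N)$, which matches the lower bound up to constants.
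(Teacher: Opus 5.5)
Your proposal is correct and follows the paper's own route. The lower bound combines Theorem~\ref{thm:accepting-space} (which gives $L_{1,2}(p|_\rho)\le\sqrt{|F|}$ for every restriction $\rho$) with the \cite{RT22,CHLT19} bound $O(L_{1,2}(\mathcal{F})/\sqrt{N})$ on the advantage of restriction-closed families. The counts $|F|=N$ and $|F|=2N$ come from the structure $F=\{0\}\times T\cup\{1\}\times\overline{T}$ applied on $n$ bits and on the padded $n+1$ bits respectively, for each of the two randomly chosen circuits.
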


\subsection{Query dependant Fourier growth bounds}
We originally obtained the following bound through a direct analysis of the $\IQP$ Fourier decomposition. 
We subsequently noticed that it is already implicit in the non-adaptive case of Girish, Sinha, Tal, and Wu~\cite{GSTW24}.
We nevertheless include the proof, since in the present $\IQP$ setting it is short and direct, and it clarifies the dependence on $d$ and $N$.
Moreover, we do not perform the final geometric-mean relaxation used in~\cite{GSTW24}, which preserves a sharper asymmetric bound.

\begin{theorem}\label{thm:d-query-bound}
	Let $p : \{-1,1\}^{\zo^n} \to [0,1]$ be the acceptance probability of a $d$-query $\IQP$ circuit.
	For every restriction $\rho$, $p|_\rho$ is a multilinear polynomial of degree at most $2d$ and
	$$
	L_{1,\ell}(p_{|\rho}) \leq
	\begin{cases}
		\displaystyle
		\sum_{\substack{0\leq a,b\leq d\\ a+b=\ell}}
		\sqrt{
			\min\left\{
				\binom{d}{a}\binom{N}{b},
				\binom{d}{b}\binom{N}{a}
			\right\}
		},
		& \text{if } 0\leq \ell\leq 2d,\\[1.2em]
		0,
		& \text{if } \ell>2d.
	\end{cases}
	$$
\end{theorem}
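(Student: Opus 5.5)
We follow the setup of the proof of Theorem~\ref{thm:accepting-space}. Let $M=H^{\otimes m}\Pi_F H^{\otimes m}$ and use Lemma~\ref{lem:restricted-decomposition} to write $p|_\rho(x)=\sum_{C,D}\chi_{C\triangle D}(x)\bra{v_C^\rho}M\ket{v_D^\rho}$ with $|C|,|D|\le d$. This immediately gives degree at most $2d$, hence $L_{1,\ell}=0$ for $\ell>2d$. For the main bound, fix $\ell$ and signs $\alpha_S$ for $|S|=\ell$. The Fourier coefficient at $S$ collects all pairs $(C,D)$ with $C\triangle D=S$. Write $a=|C\setminus D|$ and $b=|D\setminus C|$, so $a+b=\ell$ and $a,b\le d$. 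Then
\[
L^\alpha_{1,\ell}(p|_\rho)=\sum_{a+b=\ell}\ \sum_{\substack{C,D:\,|C\setminus D|=a,\ |D\setminus C|=b}}\alpha_{C\triangle D}\bra{v_C^\rho}M\ket{v_D^\rho},
\]
and it suffices to bound each $(a,b)$ block by $\sqrt{\binom{d}{a}\binom{N}{b}}$, and symmetrically by $\sqrt{\binom{d}{b}\binom{N}{a}}$.

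For a fixed block, reparametrize the pairs as $(C,P)$ with $P=C\setminus D$ and $R=D\setminus C$. Here $P$ ranges over $a$-subsets of $C$, at most $\binom{d}{a}$ choices given $C$, and $R$ ranges over $b$-subsets of $\Gamma\setminus C$, at most $\binom{N}{b}$ choices. Then $D=(C\setminus P)\cup R$ and $S=P\cup R$. Set
\[
\ket{w_{C,P}}=\sum_{R}\alpha_{P\cup R}\, M\ket{v^\rho_{(C\setminus P)\cup R}}.
\]
The block equals $\sum_{C,P}\langle v_C^\rho| w_{C,P}\rangle$. By Cauchy--Schwarz, this is at most
\[
\Big(\sum_{C,P}\|v_C^\rho\|^2\Big)^{1/2}\Big(\sum_{C,P}\|w_{C,P}\|^2\Big)^{1/2}.
\]
The first factor is at most $\sqrt{\binom{d}{a}}$, using $\sum_C\|v^\rho_C\|^2=1$.

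For the second factor, use $\|M\|_{op}=1$ to drop $M$. The vectors $v^\rho_{(C\setminus P)\cup R}$ are orthogonal for distinct $R$ when $(C,P)$ is fixed, since they correspond to distinct sets $D$. Hence $\|w_{C,P}\|^2\le\sum_R\|v^\rho_{(C\setminus P)\cup R}\|^2$. Summing over $(C,P)$ requires counting how many $(C,P,R)$ map to a given $D$. Given $D$, the set $R$ is a $b$-subset of $D$, giving at most $\binom{d}{b}$ choices, and $P$ is an $a$-subset of $\Gamma\setminus D$, giving at most $\binom{N}{a}$ choices. This yields only $\sqrt{\binom{d}{b}\binom{N}{a}}$, which is the wrong pairing.

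\textbf{Main obstacle.} Getting the product $\binom{d}{a}\binom{N}{b}$ requires choosing the Cauchy--Schwarz split so that the small $\binom{d}{\cdot}$ factor and the large $\binom{N}{\cdot}$ factor land on opposite sides. I would instead group by $(D,R)$, the $R$ lying inside $D$, with at most $\binom{d}{b}$ choices, and sum over $P\subseteq\Gamma\setminus D$. The orthogonality across distinct $C$ then absorbs the $\binom{N}{a}$ choices of $P$ without a multiplicative loss, because $\sum_P\|v^\rho_{(D\setminus R)\cup P}\|^2\le 1$ after summing. This gives the block bound $\sqrt{\binom{d}{b}}\cdot 1$ times the remaining count. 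I will need to check carefully which side carries the $\binom{N}{\cdot}$ factor, so that each grouping produces exactly one of $\binom{d}{a}\binom{N}{b}$ and $\binom{d}{b}\binom{N}{a}$.

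Taking the better of the two groupings (by $C$ or by $D$) gives the $\min$ inside the square root for each block. Summing over $(a,b)$ and maximizing over $\alpha$ then yields the stated bound, using $L_{1,\ell}=\max_\alpha L^\alpha_{1,\ell}$.
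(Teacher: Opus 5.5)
Your argument does not close, and the repair you sketch under \textbf{Main obstacle} has the same defect as the first attempt. In the $(C,P)$ grouping the two Cauchy--Schwarz factors are $\sqrt{\binom{d}{a}}$ and $\sqrt{\binom{d}{b}\binom{N}{a}}$. So the block bound you actually obtain is $\sqrt{\binom{d}{a}\binom{d}{b}\binom{N}{a}}$, not $\sqrt{\binom{d}{b}\binom{N}{a}}$. At $(a,b)=(1,1)$ this is $d\sqrt{N}$ instead of $\sqrt{dN}$. At $(a,b)=(2,0)$ it is $\sqrt{\binom{d}{2}\binom{N}{2}}$ instead of $\sqrt{\binom{d}{2}}$.

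The $(D,R)$ grouping fails the same way. It is true that $\sum_P\|v^\rho_{(D\setminus R)\cup P}\|^2\le 1$ for each fixed $(D,R)$. But once you sum over all $(D,R)$, every $C$ is counted once per triple $(D,R,P)$ producing it, i.e.\ up to $\binom{d}{a}\binom{N}{b}$ times. The other factor is $(\sum_{D,R}\|v^\rho_D\|^2)^{1/2}\le\sqrt{\binom{d}{b}}$, so the result is $\sqrt{\binom{d}{a}\binom{d}{b}\binom{N}{b}}$.

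In both cases the loss comes from refining the outer index. Attaching $P$ (or $R$) to it turns the factor $\sum_C\|v^\rho_C\|^2=1$ into a sum with multiplicity up to $\binom{d}{a}$ (or $\binom{d}{b}$). The full multiplicity then reappears on the other side anyway.

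The repair is to group by $C$ alone. Put $\ket{w_C}=\sum_{D:\,|C\setminus D|=a,\,|D\setminus C|=b}\alpha_{C\triangle D}\,M\ket{v^\rho_D}$. Then the block equals $\sum_C\langle v^\rho_C|w_C\rangle$, so its modulus is at most $(\sum_C\|w_C\|^2)^{1/2}$. For fixed $C$ the sets $D$ in this sum are distinct. Hence $\|M\|_{op}\le 1$ and orthogonality give $\|w_C\|^2\le\sum_{D}\|v^\rho_D\|^2$ over the same $D$. Summing over $C$ yields $\sum_D\|v^\rho_D\|^2$ times the number of $C$ paired with $D$, and that number is at most $\binom{d}{b}\binom{N}{a}$. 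Grouping by $D$ alone gives $\binom{d}{a}\binom{N}{b}$ symmetrically.

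So your worry about the ``wrong pairing'' is a red herring. The target is the minimum over both pairings, and each single-set grouping delivers one of them.

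For comparison, the paper also groups by a single set $A$, but it handles the inner sum differently. It bounds $\|w_A\|^2$ by a plain Cauchy--Schwarz that pays the number of partners of $A$. It then absorbs the outer sum using the pairwise orthogonal projectors $P^\rho_A$, via $\sum_A\|P^\rho_A M\ket{v^\rho_B}\|^2=\|M\ket{v^\rho_B}\|^2\le\|v^\rho_B\|^2$. The corrected version of your argument instead uses orthogonality of the $\ket{v^\rho_D}$ inside $w_C$, and needs no projectors.
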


\begin{proof}
	Fix a restriction $\rho$, and let
	$\Gamma\subseteq\zo^n$ be the set of variables left free by $\rho$.
	By Lemma~\ref{lem:restricted-decomposition}, the state before the
	final Hadamards can be written as
	$$\ket{\psi_x^\rho}=\sum_{A\in\mathcal{A}_d^\rho}\chi_A(x)\ket{v_A^\rho},\qquad\mathcal{A}_d^\rho=\{A\subseteq\Gamma:|A|\leq d\},$$
	where the vectors $\ket{v_A^\rho}$ are pairwise orthogonal and satisfy
	$\sum_{A\in\mathcal{A}_d^\rho}\|\ket{v_A^\rho}\|_2^2=1.$

	Let $M=H^{\otimes m}\Pi_F H^{\otimes m}.$
	The restricted acceptance probability is
	$$p_{|\rho}(x)=\bra{\psi_x^\rho}M\ket{\psi_x^\rho}=\sum_{A,B\in\mathcal{A}_d^\rho}\chi_{A\triangle B}(x)\bra{v_A^\rho}M\ket{v_B^\rho}.$$
	Hence $p_{|\rho}$ has degree at most $2d$, and its Fourier
	coefficients are
	$$\widehat{p_{|\rho}}(S)=\sum_{\substack{A,B\in\mathcal{A}_d^\rho\\A\triangle B=S}}\bra{v_A^\rho}M\ket{v_B^\rho}.$$

	For signs $\alpha_S\in\{\pm1\}$, define the signed level-$\ell$
	Fourier growth
	$$L_{1,\ell}^\alpha(p_{|\rho})=\sum_{|S|=\ell}\alpha_S\widehat{p_{|\rho}}(S).$$
	Then
	$$L_{1,\ell}^\alpha(p_{|\rho})=\sum_{\substack{A,B\in\mathcal{A}_d^\rho\\|A\triangle B|=\ell}}\alpha_{A\triangle B}\bra{v_A^\rho}M\ket{v_B^\rho}.$$

	For $a,b\in\mathbb{N}$, let
	$$R_{a,b}=\left\{(A,B)\in\mathcal{A}_d^\rho\times\mathcal{A}_d^\rho:|A\setminus B|=a,\ |B\setminus A|=b\right\},$$
	and define
	$$l_{a,b}^\alpha=\sum_{(A,B)\in R_{a,b}}\alpha_{A\triangle B}\bra{v_A^\rho}M\ket{v_B^\rho}.$$
	Since
	$$|A\triangle B|=|A\setminus B|+|B\setminus A|,$$
	we have
	$$L_{1,\ell}^\alpha(p_{|\rho})=\sum_{\substack{0\leq a,b\leq d\\a+b=\ell}}l_{a,b}^\alpha.$$
	By the triangle inequality and the dual characterization of the
	$\ell_1$ norm,
	$$L_{1,\ell}(p_{|\rho})\leq\sum_{\substack{0\leq a,b\leq d\\a+b=\ell}}\sup_\alpha |l_{a,b}^\alpha|.$$

	We now bound $|l_{a,b}^\alpha|$.
	For $A\in\mathcal{A}_d^\rho$, let $P_A^\rho = \sum_{\substack{u\in\zo^m\\ A(u)\cap\Gamma=A}} \ket{u}\!\bra{u}.$ be an orthogonal projector. These projectors are pairwise orthogonal and satisfy
	$$P_A^\rho\ket{v_B^\rho}=\delta_{A,B}\ket{v_B^\rho},\qquad\sum_{A\in\mathcal{A}_d^\rho}P_A^\rho=I.$$
	Define
	$$M_{A,B}^\rho=P_A^\rho M P_B^\rho.$$
	Then
	$$\bra{v_A^\rho}M\ket{v_B^\rho}=\bra{v_A^\rho}M_{A,B}^\rho\ket{v_B^\rho}.$$

	For fixed $a,b$, define
	$$\ket{w_A}=\sum_{\substack{B\in\mathcal{A}_d^\rho\\(A,B)\in R_{a,b}}}\alpha_{A\triangle B}M_{A,B}^\rho\ket{v_B^\rho}.$$
	Then
	$$l_{a,b}^\alpha=\sum_{A\in\mathcal{A}_d^\rho}\braket{v_A^\rho}{w_A}.$$
	By Cauchy--Schwarz and
	$\sum_A\|v_A^\rho\|_2^2=1$,
	$$|l_{a,b}^\alpha|\leq\left(\sum_{A\in\mathcal{A}_d^\rho}\|w_A\|_2^2\right)^{1/2}.$$

	Define
	$$\deg_L(A,a,b)=\left|\{B\in\mathcal{A}_d^\rho:(A,B)\in R_{a,b}\}\right|,$$
	$$\deg_R(B,a,b)=\left|\{A\in\mathcal{A}_d^\rho:(A,B)\in R_{a,b}\}\right|,$$
	and
	$$\Delta_L(a,b)=\max_{A\in\mathcal{A}_d^\rho}\deg_L(A,a,b),\qquad\Delta_R(a,b)=\max_{B\in\mathcal{A}_d^\rho}\deg_R(B,a,b).$$

	Applying Cauchy--Schwarz again,
	\begin{align*}
		\|\ket{w_A}\|_2^2
		&\leq\deg_L(A,a,b)\sum_{\substack{B\in\mathcal{A}_d^\rho\\(A,B)\in R_{a,b}}}\|M_{A,B}^\rho\ket{v_B^\rho}\|_2^2\\
		&\leq \Delta_L(a,b)\sum_{B\in\mathcal{A}_d^\rho}\|M_{A,B}^\rho\ket{v_B^\rho}\|_2^2.
	\end{align*}
	Hence
	$$\sum_{A\in\mathcal{A}_d^\rho}\|\ket{w_A}\|_2^2 \leq \Delta_L(a,b)\sum_{A,B\in\mathcal{A}_d^\rho}\|M_{A,B}^\rho\ket{v_B^\rho}\|_2^2.$$

	Since $M$ is an orthogonal projector, $\|M\|_{\mathrm{op}}=1$. Using the orthogonality of the projectors $P_A^\rho$, we obtain
	\begin{align*}
		\sum_{A,B\in\mathcal{A}_d^\rho} \|M_{A,B}^\rho\ket{v_B^\rho}\|_2^2
		&= \sum_{B\in\mathcal{A}_d^\rho} \sum_{A\in\mathcal{A}_d^\rho} \|P_A^\rho M\ket{v_B^\rho}\|_2^2\\
		&= \sum_{B\in\mathcal{A}_d^\rho} \|M\ket{v_B^\rho}\|_2^2\\
		&\leq \sum_{B\in\mathcal{A}_d^\rho} \|\ket{v_B^\rho}\|_2^2\\
		&=1.
	\end{align*}
	Therefore
	$$|l_{a,b}^\alpha|\leq\sqrt{\Delta_L(a,b)}.$$
	Exchanging the roles of $A$ and $B$ gives
	$$|l_{a,b}^\alpha|\leq\sqrt{\Delta_R(a,b)}.$$
	Thus
	$$L_{1,\ell}(p_{|\rho})\leq\sum_{\substack{0\leq a,b\leq d\\a+b=\ell}}\sqrt{\min\{\Delta_L(a,b),\Delta_R(a,b)\}}.$$

	It remains to compute the degrees.
	For a fixed $A$, a set $B\in R_{a,b}$ is obtained by removing $a$ elements of $A$ and adding $b$ elements of $\Gamma\setminus A$.
	Since $B\in\mathcal{A}_d^\rho$, we must also have $|A|-a+b\leq d$.
	Therefore
	$$\deg_L(A,a,b)=\mathbbm{1}_{\{|A|-a+b\leq d\}}\binom{|A|}{a}\binom{|\Gamma|-|A|}{b}.$$
	Since $|A|\leq d$ and $|\Gamma|\leq N$,
	$$\Delta_L(a,b) \leq \binom{d}{a}\binom{N}{b}.$$
	Exchanging the roles of $A$ and $B$ gives
	$$\Delta_R(a,b) \leq \binom{d}{b}\binom{N}{a}.$$
	Hence, for $0\leq\ell\leq2d$,
	$$
	L_{1,\ell}(p_{|\rho})
	\leq
	\sum_{\substack{0\leq a,b\leq d\\a+b=\ell}}
	\sqrt{
		\min\left\{
			\binom{d}{a}\binom{N}{b},
			\binom{d}{b}\binom{N}{a}
		\right\}
	}.
	$$
	For $\ell>2d$, the Fourier coefficients vanish because $p_{|\rho}$ has degree at most $2d$.
\end{proof}

\begin{corollary}
	Let $p : \{-1,1\}^{\zo^n} \to [0,1]$ be the acceptance probability of a $d$-query $\IQP$ circuit.
	For every restriction $\rho$, $p|_\rho$ is a multilinear polynomial of degree at most $2d$ and
	$$L_{1,\ell}(p_{|\rho}) = \left\{\begin{array}{ll}
		O_{\ell}(d^{\lceil l/2 \rceil /2}N^{\lfloor l/2 \rfloor /2})
		& \text{if } 0\leq \ell \leq 2d \\
		0 & \text{if } \ell > 2d.
	\end{array}\right.$$
	In particular,
	$$L_{1,2}(p|_\rho) = O(\sqrt{dN}), \quad L_{1,3}(p|_\rho) = O(d\sqrt{N}), \quad L_{1,6}(p|_\rho) = O(d^{3/2}N^{3/2}).$$
\end{corollary}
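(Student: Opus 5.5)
The corollary should follow directly from Theorem~\ref{thm:d-query-bound}. The degree statement and the vanishing for $\ell>2d$ are already contained there. So the only work is to bound the binomial sum asymptotically for fixed $\ell$.

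For fixed $\ell$, the sum has at most $\ell+1$ terms, indexed by $a+b=\ell$ with $0\le a,b\le d$. Each term is at most
$$\sqrt{\min\{d^aN^b,\ d^bN^a\}}$$
up to constants depending only on $\ell$, using $\binom{d}{a}\le d^a$ and $\binom{N}{b}\le N^b$. We may assume $d\le N$; otherwise the trivial bound $L_{1,\ell}\le\binom{N}{\ell}^{1/2}\cdots$ or simply the symmetric choice still works, and in the intended regime $d\le N$ anyway. Then $\min\{d^aN^b,d^bN^a\}=d^{\max(a,b)}N^{\min(a,b)}$.

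Subject to $a+b=\ell$, the quantity $d^{\max}N^{\min}$ is maximized when $\min(a,b)$ is as large as possible, since $d\le N$. That happens at $\min(a,b)=\lfloor \ell/2\rfloor$ and $\max(a,b)=\lceil\ell/2\rceil$. Hence every term is
$$O_\ell\big(d^{\lceil\ell/2\rceil/2}N^{\lfloor\ell/2\rfloor/2}\big),$$
and summing $\ell+1=O_\ell(1)$ terms gives the claimed bound.

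The special cases are then read off directly:
\begin{itemize}
\item $\ell=2$, $(a,b)=(1,1)$: the bound is $\sqrt{dN}$.
\item $\ell=3$, $\{a,b\}=\{1,2\}$: the bound is $\sqrt{d^2N}=d\sqrt N$.
\item $\ell=6$, $(3,3)$: the bound is $\sqrt{d^3N^3}=d^{3/2}N^{3/2}$.
\end{itemize}

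The only delicate point is the monotonicity argument, which needs $d\le N$. If $d>N$, the roles swap and the bound $d^{\lceil\ell/2\rceil/2}N^{\lfloor\ell/2\rfloor/2}$ may fail. I would handle this by noting that the case is vacuous for meaningful query lower bounds, or that $\binom{d}{a}$ terms are dominated anyway since the $\min$ picks $d^{\lfloor\ell/2\rfloor}N^{\lceil\ell/2\rceil}$, which is still at most $d^{\lceil\ell/2\rceil}N^{\lfloor\ell/2\rfloor}$ when $N\le d$. So in fact $\min\{d^aN^b,d^bN^a\}\le d^{\lceil\ell/2\rceil}N^{\lfloor\ell/2\rfloor}$ holds in both regimes, by checking the term $\min(a,b)=\lfloor\ell/2\rfloor$ and noting that unbalanced terms are smaller. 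I would verify that inequality carefully as the main step.
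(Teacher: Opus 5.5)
Your argument is correct and matches the intended derivation. The paper gives no separate proof and treats this corollary as immediate from Theorem~\ref{thm:d-query-bound}: bound $\binom{d}{a}\binom{N}{b}\le d^aN^b$, then observe that the balanced split $\min(a,b)=\lfloor \ell/2\rfloor$ dominates the at most $\ell+1$ terms. Your check that $\min\{d^aN^b,d^bN^a\}\le d^{\lceil \ell/2\rceil}N^{\lfloor \ell/2\rfloor}$ holds both when $d\le N$ and when $d>N$ is right and covers the case the paper leaves implicit. So you can drop the half-formed remark about a ``trivial bound'' in your middle paragraph.
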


The results of~\cite{BS21,Gir25} implies that if $\mathcal{F}$ is a family of $3N$-variate Boolean functions closed under restrictions, then the maximum advantage with which $\mathcal{F}$ solves $3$-\textsc{Forrelation} is at most
$$O\left(\frac{L_{1,3}(\mathcal{F})}{N}+\frac{L_{1,6}(\mathcal{F})}{N^2}\right).$$

Combining the above Fourier growth bounds for $d$-query $\IQP$ circuits with the $3$-\textsc{Forrelation} advantage bound yields a direct proof that $3$-\textsc{Forrelation} cannot be solved by $\IQP$ circuits making $d=o(N^{1/3})$ queries with constant advantage. In particular, this recovers the separation of $3$-\textsc{Forrelation} from $\IQP$ directly at the level of $\IQP$ Fourier growth, without using that $\IQP \subset \hBQP$ and the Girish's Fourier growth bound on $\hBQP$ \cite{Gir25}.

	\paragraph{Acknowledgments.} We acknowledge funding from the French PEPR integrated projects EPIQ (ANR-22-PETQ-007), PQ-TLS (ANR-22-PETQ-008) and HQI (ANR-22-PNCQ-0002) all part of plan France 2030.
	
	\bibliography{paperbib}

@inbook{Rot09,
   title={Quantum Algorithms to Solve the Hidden Shift Problem for Quadratics and for Functions of Large Gowers Norm},
   ISBN={9783642038167},
   ISSN={1611-3349},
   url={http://dx.doi.org/10.1007/978-3-642-03816-7_56},
   DOI={10.1007/978-3-642-03816-7_56},
   booktitle={Mathematical Foundations of Computer Science 2009},
   publisher={Springer Berlin Heidelberg},
   author={Rötteler, Martin},
   year={2009},
   pages={663–674}
}

@inproceedings{Aar10,
	title={{BQP} and the polynomial hierarchy},
	author={Aaronson, Scott},
	booktitle={Proceedings of the forty-second ACM symposium on Theory of computing},
	pages={141--150},
	year={2010}
}

@inproceedings{AA15,
	title={Forrelation: A problem that optimally separates quantum from classical computing},
	author={Aaronson, Scott and Ambainis, Andris},
	booktitle={Proceedings of the forty-seventh annual ACM symposium on Theory of computing},
	pages={307--316},
	year={2015}
}

@inproceedings{AC17,
	title={Complexity-theoretic foundations of quantum supremacy experiments},
	author={Aaronson, Scott and Chen, Lijie},
	booktitle={Proceedings of the 32nd Computational Complexity Conference},
	pages={1--67},
	year={2017}
}

@article{BCJ25,
	title={Instantaneous quantum polynomial-time sampling and verifiable quantum advantage: Stabilizer scheme and classical security},
	author={Bremner, Michael J and Cheng, Bin and Ji, Zhengfeng},
	journal={PRX Quantum},
	volume={6},
	number={2},
	pages={020315},
	year={2025},
	publisher={APS}
}

@article{BJS11,
	title={Classical simulation of commuting quantum computations implies collapse of the polynomial hierarchy},
	author={Bremner, Michael J and Jozsa, Richard and Shepherd, Dan J},
	journal={Proceedings of the Royal Society A: Mathematical, Physical and Engineering Sciences},
	volume={467},
	number={2126},
	pages={459--472},
	year={2011},
	publisher={The Royal Society}
}

@article{BMS16,
	title={Average-case complexity versus approximate simulation of commuting quantum computations},
	author={Bremner, Michael J and Montanaro, Ashley and Shepherd, Dan J},
	journal={Physical review letters},
	volume={117},
	number={8},
	pages={080501},
	year={2016},
	publisher={APS}
}

@inproceedings{BS21,
	title={$k$-forrelation optimally separates quantum and classical query complexity},
	author={Bansal, Nikhil and Sinha, Makrand},
	booktitle={Proceedings of the 53rd Annual ACM SIGACT Symposium on Theory of Computing},
	pages={1303--1316},
	year={2021}
}

@inproceedings{BV93,
	title={Quantum complexity theory},
	author={Bernstein, Ethan and Vazirani, Umesh},
	booktitle={Proceedings of the twenty-fifth annual ACM symposium on Theory of computing},
	pages={11--20},
	year={1993}
}

@InProceedings{CHLT19,
  author =	{Chattopadhyay, Eshan and Hatami, Pooya and Lovett, Shachar and Tal, Avishay},
  title =	{{Pseudorandom Generators from the Second Fourier Level and Applications to AC0 with Parity Gates}},
  booktitle =	{10th Innovations in Theoretical Computer Science Conference (ITCS 2019)},
  pages =	{22:1--22:15},
  series =	{Leibniz International Proceedings in Informatics (LIPIcs)},
  ISBN =	{978-3-95977-095-8},
  ISSN =	{1868-8969},
  year =	{2019},
  volume =	{124},
  editor =	{Blum, Avrim},
  publisher =	{Schloss Dagstuhl -- Leibniz-Zentrum f{\"u}r Informatik},
  address =	{Dagstuhl, Germany},
  URL =		{https://drops.dagstuhl.de/entities/document/10.4230/LIPIcs.ITCS.2019.22},
  URN =		{urn:nbn:de:0030-drops-101150},
  doi =		{10.4230/LIPIcs.ITCS.2019.22}
}

@article{DJ92,
	title={Rapid solution of problems by quantum computation},
	author={Deutsch, David and Jozsa, Richard},
	journal={Proceedings of the Royal Society of London. Series A: Mathematical and Physical Sciences},
	volume={439},
	number={1907},
	pages={553--558},
	year={1992},
	publisher={The Royal Society London}
}

@inproceedings{Gir25,
	author = {Girish, Uma},
	title = {Fourier Spectrum of Noisy Quantum Algorithms},
	year = {2026},
	isbn = {9798400725364},
	publisher = {Association for Computing Machinery},
	address = {New York, NY, USA},
	url = {https://doi.org/10.1145/3798129.3800750},
	doi = {10.1145/3798129.3800750},
	booktitle = {Proceedings of the 58th Annual ACM Symposium on Theory of Computing},
	pages = {305–313},
	numpages = {9},
	location = {Salt Lake City, UT, USA},
	series = {STOC '26}
}

@article{IRR+21,
	title={Tight bounds on the Fourier growth of bounded functions on the hypercube},
	author={Iyer, Siddharth and Rao, Anup and Reis, Victor and Rothvoss, Thomas and Yehudayoff, Amir},
	journal={arXiv preprint arXiv:2107.06309},
	year={2021}
}

@article{JM24,
	title={The space just above one clean qubit},
	author={Jacobs, Dale and Mehraban, Saeed},
	journal={arXiv preprint arXiv:2410.08051},
	year={2024}
}

@article{Kah23,
	title={Forging quantum data: classically defeating an {IQP}-based quantum test},
	author={Kahanamoku-Meyer, Gregory D},
	journal={Quantum},
	volume={7},
	pages={1107},
	year={2023},
	publisher={Verein zur F{\"o}rderung des Open Access Publizierens in den Quantenwissenschaften}
}

@article{KL98,
	title={Power of one bit of quantum information},
	author={Knill, Emanuel and Laflamme, Raymond},
	journal={Physical Review Letters},
	volume={81},
	number={25},
	pages={5672},
	year={1998},
	publisher={APS}
}

@article{PLS+24,
	title={Robust sparse {IQP} sampling in constant depth},
	author={Paletta, Louis and Leverrier, Anthony and Sarlette, Alain and Mirrahimi, Mazyar and Vuillot, Christophe},
	journal={Quantum},
	volume={8},
	pages={1337},
	year={2024},
	publisher={Verein zur F{\"o}rderung des Open Access Publizierens in den Quantenwissenschaften}
}

@article{RT22,
	title={Oracle separation of {BQP} and {PH}},
	author={Raz, Ran and Tal, Avishay},
	journal={ACM Journal of the ACM (JACM)},
	volume={69},
	number={4},
	pages={1--21},
	year={2022},
	publisher={ACM New York, NY}
}

@article{SB09,
	title={Temporally unstructured quantum computation},
	author={Shepherd, Dan and Bremner, Michael J},
	journal={Proceedings of the Royal Society A: Mathematical, Physical and Engineering Sciences},
	volume={465},
	number={2105},
	pages={1413--1439},
	year={2009},
	publisher={The Royal Society London}
}

@article{SJ08,
	title={Estimating Jones polynomials is a complete problem for one clean qubit},
	author={Shor, Peter W and Jordan, Stephen P},
	journal={Quantum Information \& Computation},
	volume={8},
	number={8},
	pages={681--714},
	year={2008},
	publisher={Rinton Press, Incorporated Paramus, NJ}
}

@article{Sim97,
	title={On the power of quantum computation},
	author={Simon, Daniel R},
	journal={SIAM journal on computing},
	volume={26},
	number={5},
	pages={1474--1483},
	year={1997},
	publisher={SIAM}
}

@inproceedings{SSW21,
	title={An optimal separation of randomized and quantum query complexity},
	author={Sherstov, Alexander A and Storozhenko, Andrey A and Wu, Pei},
	booktitle={Proceedings of the 53rd Annual ACM SIGACT Symposium on Theory of Computing},
	pages={1289--1302},
	year={2021}
}

@inproceedings{Tal20,
	title={Towards optimal separations between quantum and randomized query complexities},
	author={Tal, Avishay},
	booktitle={2020 IEEE 61st Annual Symposium on Foundations of Computer Science (FOCS)},
	pages={228--239},
	year={2020},
	organization={IEEE}
}

@inproceedings{GSTW24,
  title={The power of adaptivity in quantum query algorithms},
  author={Girish, Uma and Sinha, Makrand and Tal, Avishay and Wu, Kewen},
  booktitle={Proceedings of the 56th Annual ACM Symposium on Theory of Computing},
  pages={1488--1497},
  year={2024}
}
	\bibliographystyle{alpha}	
\end{document}